\newcommand{\citep}[1]{\cite{#1}}
\begin{document}

\title{Online Stochastic Packing applied to  \\ Display Ad Allocation}

\author{
Jon Feldman\thanks{Google Research, 76 9th Ave,
New York, NY 10011, \{jonfeld,mirrokni,cstein\}@google.com }
\and Monika Henzinger \thanks{University of Vienna, Austria.
  monika.henzinger@univie.ac.at}
\and Nitish Korula \thanks{University of Illinois at Urbana Champaign,
  nkorula2@illinois.edu} 
\and Vahab S. Mirrokni\footnotemark[1] 
\and Cliff Stein\thanks{Google Research and Columbia University, New
  York, NY} 
}

\newtheorem{fact}{Fact}
\newtheorem{lemma}{Lemma}
\newtheorem{theorem}[lemma]{Theorem}
\newtheorem{assumption}[lemma]{Assumption}
\newtheorem{definition}[lemma]{Definition}
\newtheorem{corollary}[lemma]{Corollary}
\newtheorem{prop}[lemma]{Proposition}
\newtheorem{claim}[lemma]{Claim}
\newtheorem{remark}[lemma]{Remark}
\newtheorem{prob}{Problem}
\newtheorem{conjecture}{Conjecture}
\newtheorem{proposition}[lemma]{Proposition}

\newenvironment{proofsketch}{\noindent{\sc Proof Sketch.}}%
        {\hspace*{\fill}$\Box$\par\vspace{4mm}}
\newenvironment{proofof}[1]{\smallskip\noindent{\bf Proof of #1.}}%
        {\hspace*{\fill}$\Box$\par}

\newcommand{\opt}{\textsc{OPT}}
\newcommand{\etal}{{\em et al.}\ }
\newcommand{\dual}{\textbf{Dual-LP}\xspace}
\newcommand{\primal}{\textbf{Primal-LP}\xspace}
\def\Exp#1{\mathbb{E}\left[#1\right]}
\def\Prob#1{\textbf{Pr}\left[#1\right]}

\def\eps{\varepsilon}
\def\bar{\overline}
\def\floor#1{\lfloor {#1} \rfloor}
\def\ceil#1{\lceil {#1} \rceil}
\def\script#1{\mathcal{#1}}
\def\prof#1{\textrm{profit}(#1)}
\def\stdLP{\textbf{Standard LP}}
\def\P{\script{P}}
\def\dmax{d_{\max}}
\def\dmin{d_{\min}}
\def\pdavg{$\mathsf{PD\_AVG}$\xspace}
\def\greedy{$\mathsf{GREEDY}$\xspace}
\def\pdexp{$\mathsf{PD\_EXP}$\xspace}
\def\dualsample{$\mathsf{D\_SAMPLE}$\xspace}
\def\dualbase{$\mathsf{DualBase}$\xspace}
\def\dualhistory{$\mathsf{D\_HISTORY}$}
\def\dualbegin{$\mathsf{D\_BEGIN}$}
\def\hybrid{$\mathsf{HYBRID}$\xspace}
\def\fair{$\mathsf{FAIR}$\xspace}
\def\lpweight{$\mathsf{LP\_WEIGHT}$\xspace}
\def\optweight{\mathsf{OPT\_WEIGHT}}
\def\optcardinality{\mathsf{OPT\_CARDINALITY}}
\def\gain{\textit{gain}}
\def\O{\script{O}}

\maketitle
\begin{abstract}
  Inspired by online ad allocation, we study online stochastic packing
  linear programs from theoretical and practical standpoints.  We
  first present a near-optimal online algorithm for a general class of
  packing linear programs which model various online resource
  allocation problems including online variants of routing, ad
  allocations, generalized assignment, and combinatorial auctions.  As
  our main theoretical result, we prove that a simple primal-dual
  training-based algorithm achieves a $(1-o(1))$-approximation
  guarantee in the random order stochastic model. This is a
  significant improvement over logarithmic or constant-factor
  approximations for the adversarial variants of the same problems
  (e.g. factor $1-{1\over e}$ for online ad allocation, and $\log(m)$
  for online routing).
  We then focus on the online display ad allocation problem and study
  the efficiency and fairness of various training-based and online
  allocation algorithms
  on data sets collected from real-life display ad allocation system.
  Our experimental evaluation confirms the effectiveness of
  training-based primal-dual algorithms on real data sets, and
  also indicate an intrinsic trade-off between
  fairness and efficiency.
\end{abstract}

\section{Introduction}
Online stochastic optimization is a central problem in operations
research with many applications in dynamic resource allocation. In
these settings, given a set of resources, demands for the resources
arrive online, with associated values; given a general prior about the
demands, one has to decide whether and how to satisfy (i.e., allocate
the desired resources to) a demand when it arrives. The goal is to
find a valid assignment with maximum total value.  Such problems
appear in many areas including online
routing~\cite{BN06,AAP-Routing93}, online combinatorial
auctions~\cite{CHMS10}, online ad allocation
problems~\cite{msvv,devanur-hayes,WINE09}, and online dynamic pricing
and inventory management problems.  For example, in routing problems,
we are given a network with capacity constraints over edges; customers
arrive online and bid for a subset of edges (typically a path) in the
network, and the goal is to assign paths to new customers so as to
maximize the total social welfare. Similarly, in online combinatorial
auctions, bidders arrive online and may bid on a subset of resources;
the auctioneer should decide whether to sell those resources to the
bidder. In the display ads problem, when users visit a website, the
website publisher has to choose ads to show them so as to maximize the
value of the displayed ads.  In this paper, we study these online
stochastic resource allocation problems from theoretical and practical
standpoints.  Our theoretical results apply to a general set of
problems including all those discussed above.  Our practical results
apply to the problem of display ads and give additional validation of
our theoretical models and results.

More specifically, we consider the following general class of packing
linear programs (PLP): Let $J$ be a set of $m$ {\em resources}; each
resource $j \in J$ has a capacity $c_j$. The set of resources and
their capacities are known in advance. Let $I$ be a set of $n$ {\em
  agents} that arrive one by one online, each with a set of {\em
  options} $O_i$.  Each option $o \in O_i$ of agent $i$ has an
associated {\em value} $w_{io} \geq 0$ and requires $a_{ioj} \geq 0$
units of each resource $j$. The set of options and the values $w_{io}$
and $a_{ioj}$ arrive together with agent $i$.  When an agent arrives,
the algorithm has to immediately decide whether to assign the agent
and if so, which option to choose. The goal is to find a maximum-value
allocation that does not allocate more of any resource than is
available.

In the {\em adversarial} or worst-case setting, no online algorithm
can achieve any non-trivial competitive ratio; consider the simple
case of one resource with capacity one and two agents.  For each agent
there are just two options, namely to get the resource or not to get
it. If an agent gets the resource, he uses its whole capacity.  The
first agent has value 100 for getting the resource and value 0 for not
getting the resource. If he is assigned the resource, then the value
of the second agent for getting the resource is 10000, otherwise it is
1. In both cases the algorithm achieves less than 1/100th of the value
of the optimal solution.  This example can easily be generalized to
show that no non-trivial competetive ratio is possible.

Since in the adversarial setting the lack of prior information about the
arrival rate of different types of agents implies strong impossibility
results, it is natural to consider {\em stochastic}
settings for online allocation problems, where we may have some prior
information about the arrival rate of different types of agents. In
particular, we consider the {\em random-order stochastic model}, where
the order in which impressions arrive is random, but we do not have
any other prior information.  We present a training-based online
algorithm for the general class of packing linear programs described
above and prove that in the random-order stochastic model, it achieves
an approximation ratio of $1-\eps$ under some mild
assumptions\footnote{In this context, an
  ``$\alpha$-approximation'' means that with high probability under
  the randomness in the stochastic model, the algorithm achieves at
  least an $\alpha$ fraction of the value (efficiency) of the offline optimal
  solution for the actual instance.}. 
  This result also implies the same result in the
i.i.d. model\footnote{In the i.i.d model each impression arrives
  independently and identically according to a particular but unknown
  probability distribution over the set of possible types of
  impressions~\cite{GM07}. Our stochastic model captures the i.i.d model.}.

Our {\em training-based primal-dual} algorithm for the stochastic PLP
problem observes the first $\eps$ fraction of the input and then
solves an LP on this instance. (This requires knowing the number of
agents in advance, which is unavoidable for any sub-logarithmic
approximation; see Theorem~\ref{thm:lowerBound}.) For
each resource, the corresponding dual variable extracted from this LP
serves as a \emph{(posted) price per unit} of the resource for the
remaining agents. The algorithm allocates to each remaining agent the
option maximizing his {\em utility}, defined as the difference between
the value of an option and the price he must pay to obtain the
necessary resources.  We prove that this algorithm provides a $1-\eps$
approximation for the large class of natural packing problems we
consider, provided that no individual option for any agent consumes
too much of any resource or provides too large a fraction of the total
value. Specifically we show the following result. Recall that $n$ and
$m$ denote the number of agents and resources respectively; $q$
denotes $\max_i |O_i|$ and $\opt$ the value of an optimal off-line
solution to the PLP problem.

\begin{theorem}\label{thm:main}
  The Training-Based Primal-Dual algorithm is $(1 -
  O(\eps))$-competitive (a PTAS) for the online stochastic PLP
  problem with high probability, as long as  (1) $\max_{i,o}\left
    \{\frac{w_{io}}{\opt}\right\} \le \frac{\eps}{(m+1) (\ln n + \ln
    q)}$ and (2)  $\max_{i,o,j}\left\{\frac{a_{ioj}}{c_j}\right\} \le
  \frac{\eps^3}{(m+1) (\ln n + \ln q)}$.
\end{theorem}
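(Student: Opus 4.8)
The plan is to analyze the algorithm through LP duality, treating the trained dual variables as resource prices. First I would write the \primal as $\max \sum_{i,o} w_{io} x_{io}$ subject to $\sum_{i,o} a_{ioj} x_{io}\le c_j$ for every resource $j$ and $\sum_{o} x_{io}\le 1$ for every agent $i$, and form its \dual with price variables $\alpha_j\ge 0$ and surplus variables $\beta_i\ge 0$, whose constraints read $\sum_j a_{ioj}\alpha_j + \beta_i \ge w_{io}$. The structural fact driving everything is a \emph{price oracle} observation: for any fixed price vector $\alpha\ge 0$, setting $\beta_i(\alpha)=\max_{o}\big(w_{io}-\sum_j a_{ioj}\alpha_j\big)^{+}$ yields a dual-feasible pair, so by weak duality $\sum_j c_j\alpha_j+\sum_i \beta_i(\alpha)\ge \opt$; moreover, having each agent choose the utility-maximizing option (and abstain when its utility is negative) is exactly the complementary-slackness-optimal primal response to $\alpha$. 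Hence, if one used the optimal full-instance prices $\alpha^{*}$ and never exhausted capacity, this greedy response would recover a $(1-o(1))$ fraction of $\opt$.

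Since $\alpha^{*}$ is unknown online, the second step is to argue that the prices $\hat\alpha$ obtained from the trained sample LP are an adequate substitute. In the random-order model the sample $S$ is a uniformly random $\eps$-fraction of the agents, and I would solve the sample LP with capacities scaled to $\eps c_j$ so that $\hat\alpha$ lives on the same scale as $\alpha^{*}$. Fix a candidate price vector $\alpha$ and, for each agent $i$, let $v_i(\alpha)$ and $\rho_{ij}(\alpha)$ denote the value collected and the amount of resource $j$ consumed by $i$ under the greedy response to $\alpha$. For a fixed $\alpha$ these are deterministic per agent, and $\sum_{i\in S}v_i(\alpha)$ and $\sum_{i\in S}\rho_{ij}(\alpha)$ are sums over a random subset; hypotheses (1) and (2) bound each summand by $\eps/\big((m+1)(\ln n+\ln q)\big)$ relative to $\opt$ and by $\eps^{3}/\big((m+1)(\ln n+\ln q)\big)$ relative to $c_j$, which is precisely the scale at which a Hoeffding/Bernstein bound forces these sums to concentrate about their $\eps$-scaled means. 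Applying this simultaneously to the value and to all $m$ resource constraints costs the factor $m+1$, and---this is the crucial point---applying it over all price vectors that can arise costs the union-bound factor $\ln n+\ln q$, since $\hat\alpha$ is itself a random function of $S$.

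With concentration in hand I would conclude in two strokes. Because $\hat\alpha$ achieves the sample's scaled optimum, it is a near-optimal dual for the full instance (dual value within $(1\pm O(\eps))$ of $\opt$), so weak duality is near-tight and the greedy response to $\hat\alpha$ nearly saturates each $\eps c_j$; the same prices applied to the remaining $(1-\eps)$-fraction $I\setminus S$ (again a random subset) then collect value at least $(1-\eps)(1-O(\eps))\opt=(1-O(\eps))\opt$ and consume at most $(1-\eps)(1+O(\eps))c_j\le c_j$ of each resource with high probability. Feasibility along the way is free: once $\hat\alpha$ is fixed each agent's decision depends only on its own data, so total consumption is additive and monotone in the prefix, and a within-capacity \emph{final} total certifies both that no intermediate step overflowed and that no agent was ever rejected for want of a resource. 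Thus the online, irrevocable decisions coincide with the offline greedy response analyzed above, and combining the value bound with feasibility gives the claimed $(1-O(\eps))$-competitiveness.

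The main obstacle is the circular dependence between the learned prices and the sample: concentration for a \emph{fixed} price vector is routine under (1) and (2), but $\hat\alpha$ is determined by $S$ while the sum we must control is taken over the correlated complement $I\setminus S$. The resolution is to establish the concentration statement uniformly over a finite but sufficiently rich family of candidate prices---e.g.\ the $O(nq)$ option ``breakpoints'' at which some agent's utility-maximizing choice flips, or the optimal bases of the sample LP---which is exactly where the $\ln n+\ln q$ term and the $1/\eps$ and $1/\eps^{3}$ factors in the hypotheses originate. A secondary subtlety is the asymmetry between (1) and (2): feasibility is the tighter requirement, because training at capacity $\eps c_j$ and deploying on a $(1-\eps)$-fraction leaves only an $\Theta(\eps)$ buffer in each resource, so the per-option resource consumption must be smaller (the $\eps^{3}$ bound) than the per-option value (the $\eps$ bound) for the deviations to fit inside that buffer.
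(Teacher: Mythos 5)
Your plan is, in outline, the same proof as the paper's: train on the $\eps n$-agent sample with capacities scaled by $\eps$, post the resulting dual prices, note that the greedy response together with $z_i = \gain(o)$ is a feasible dual solution (so weak duality bounds $\opt$), use complementary slackness on the sample plus concentration to show that every constraint with positive price is nearly saturated by the full greedy response (hence the primal value of that response is within $1-O(\eps)$ of the dual value), and break the circularity between the learned prices and the sample by proving concentration uniformly over a finite family of candidate price vectors. Your second candidate family, the optimal bases of the sample LP, is exactly the paper's Lemma~\ref{lem:fewSolutions} (at most $\binom{nq}{m} \le (nq)^m$ vertex solutions); your first, the ``$O(nq)$ option breakpoints,'' would not suffice for $m \ge 2$, since the regions of price space inducing distinct greedy responses form an arrangement with $(nq)^{\Theta(m)}$ cells rather than $O(nq)$, and it is precisely this $(nq)^m$ count that forces the $(m+1)(\ln n + \ln q)$ factor in hypotheses (1) and (2).

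The one step that fails as written is feasibility. You claim the post-sample agents consume at most $(1-\eps)(1+O(\eps))c_j \le c_j$ of each resource, so that ``feasibility along the way is free.'' The inequality $(1-\eps)(1+c\eps)\le 1$ requires the hidden constant $c$ to be at most about $1+\eps$, and the concentration bound does not deliver that: from $C(j,S)\le \eps$ (normalized capacities) and the deviation bound, the total consumption satisfies only $C(j)\le 1+3(\eps+\eps^2)$ (Proposition~\ref{prop:cNear1}), and the consumption outside the sample is then bounded only by $(1-\eps)C(j)+r_j \approx 1+2\eps$, which exceeds capacity by $\Theta(\eps)$. So the online solution you output can genuinely be infeasible, and the theorem requires a feasible allocation. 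The paper closes this hole with one extra move that your writeup is missing: run the entire algorithm with every capacity pre-scaled down by a $1+O(\eps)$ factor. This costs only a $1+O(\eps)$ factor in $\opt$ (scale down the optimal primal solution to see this), and the $\Theta(\eps)$ overflow relative to the reduced capacities then still respects the original ones. With that patch your argument goes through and coincides with the paper's.
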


\subsection{Applications}
Theorem~\ref{thm:main} has many applications; we elaborate on several,
including routing problems, online combinatorial auctions, the display
ad problem, and the adword allocation problem.  For each of these
problems, we improve on the known results for the online version.  In
each, we will comment on the interpretation of the two conditions of
Theorem~\ref{thm:main} in that application.

\medskip
In the \emph{online routing} problem, we are initially given a network
with capacity constraints over the $m$ edges. When a customer $i \in
I$ arrives online, she wishes to send $d_i$ units of flow between some
vertices $s_i$ and $t_i$, and derives $w_i$ units of value from
sending such flow. Thus, the set of options $O_i$ for customer $i$ is
the set of all $s_i-t_i$ paths in the network.  The algorithm must
pick a set of customers $I^* \subseteq I$, and satisfy their demands
by allocating a path to each of them while respecting the capacity
constraints on each edge; the goal is to maximize the total value of
satisfied customers. For this problem, the dual variables learned from
the sample yield a price for each edge; each customer is allocated the
minimum-cost $s_i-t_i$ path if its cost is no more than $w_i$. In road
networks, for instance, these dual variables can be interpreted as the
tolls to be charged to prevent congestion.  Theorem~\ref{thm:main}
applies when the contributions of individual agents/vehicles to the
total objective or to road congestion are small.  As one such example,
over a million vehicles enter or leave Manhattan daily, with the
George Washington Bridge alone carrying several hundred thousand.
Online routing problems have been studied extensively in the
adversarial model when demands can be large, and there are
\emph{(poly)-logarithmic} lower and upper bounds even for special
cases \cite{AAP-Routing93,BN06}.  Our approach gives a $1 -
o(1)$-approximation for the described stochastic variants of these
problems.

\medskip
In the \emph{combinatorial auction} problem, we are initally given a
set $J$ of $m$ goods, with $c_j$ units for each good $j \in J$.
Agents arrive online, and the options for agent $i$ may include
different bundles of goods he values differently; option $o \in O_i$
provides $w_{io}$ units of value, and requires $a_{ioj}$ units of good
$j$.  We wish to find a valid allocation maximizing social
welfare. Here, the dual variables learned from the sample yield a
price per unit of each good; each agent picks the option that
maximizes his utility.  Here Theorem~\ref{thm:main} applies as long as
no individual agent controls a large fraction of the market, and as
long as the set of options for any single agent is at most exponential
in the number of resources. These conditions often hold, as in cases
when bidders are single-minded or the number of bundles they are
interested in is polynomial in $n$, or if their options correspond to
using different subsets of the resources.  We also observe that the
posted prices result in a take-it-or-leave it auction, and thus a
truthful online allocation mechanism.
Revenue maximization in online auctions using sequence item pricing
has been explored recently in the literature~\cite{BIKK08,CHMS10}.
\cite{CHMS10} achieves constant-factor approximations for these
problems in more general models than we consider.

\medskip 
In the {\em Display Ads Allocation (DA)} problem~\cite{WINE09}, there
is a set $J$ of $m$ advertisers who have paid a web publisher for
their ads to be shown to visitors to the website. The contract bought
by advertiser $j$ specifies an integer upper bound on the number
$n(j)$ of impressions that $j$ is willing to pay for. A set $I$ of
impressions arrives online, each impression $i$ with a value $w_{ij}
\geq 0$ for advertiser $j$. Each impression can be assigned to at most
one advertiser, i.e., there are $m$ options for each impression, and
each option $o$ has $a_{ioj} = 1$ for advertiser $j$. The goal is to
maximize the value of all the assigned impressions. The dual variables
learned from the sample yield a discount factor $\beta_j$ for each
advertiser $j$, and the algorithm is to assign an impression to
advertiser $j$ that maximizes $w_{ij} - \beta_j$. The contracts for
advertisers typically involve thousands of impressions, so the
contribution of any one impression/agent is small, and the hypotheses
of Theorem~\ref{thm:main} hold.  The adversarial online DA problem was
considered in \cite{WINE09}, which showed that the problem was
inapproximable without exploiting {\em free disposal}; using this
property, a simple greedy algorithm is $1\over 2$-competitive, which
is optimal. When the demand of each advertiser is large, a $(1-{1\over
  e})$-competitive algorithm exists (see \cite{WINE09} for details of
the model and results), and this is the best possible. For the {\em
  unweighted} (max-cardinality) version of this problem in the
i.i.d. model, a $0.67$-competitive algorithm has been recently
developed~\cite{FMMM09}; this improves the known $1-{1\over
  e}$-approximation algorithm for online stochastic
matching~\cite{KVV}.

\medskip
The {\em AdWords (AW)} problem~\cite{msvv,devanur-hayes} is related to
the DA problem; here we allocate impressions resulting from search
queries. Advertiser $j$ has a budget $b(j)$ instead of a bound $n(j)$
on the number of impressions. Assigning impression $i$ to advertiser
$j$ consumes $w_{ij}$ units of $j$'s budget instead of 1 of the $n(j)$
slots, as in the DA problem. Several approximation algorithms have
been designed for the {\em offline} AW
problem~\cite{chakrabarty2008aba,srinivasan2008baf,azar2008iaa}.  For
the online setting, if every weight is very small compared to the
corresponding budget, there exist $(1-{1 \over e})$-competitive online
algorithms~\cite{msvv,buchbinder-jain-naor,GM08,AM09}, and this factor
is tight.  In order to go beyond the competitive ratio of $1-{1\over
  e}$ in the adversarial model, stochastic online settings have been
studied, such as the random order and i.i.d
models~\cite{GM08}. Devanur and Hayes~\cite{devanur-hayes} described a
primal-dual ($1-\eps$)-approximation algorithm for this problem in the
random order model, with the assumption that $\opt$ is larger than
$O({m^2 \over \eps^3})$ times each $w_{ij}$, where $m$ is the number
of advertisers; Theorem~\ref{thm:main} can be viewed as generalizing
this result to a much larger class of problems.

\subsection{Experimental Validation} 
For the applications described above, stochastic models are reasonable
as the algorithm often has an idea of what agents to expect. For
example, in the Display Ad Allocation problem, agents correspond to
users visiting the website of a publisher who has sold contracts to
advertisers.  As the publisher most likely sees similar user traffic
patterns from day to day, he has an idea of the available ad inventory
based on historical data.  In Section~\ref{sec:experimental}, we
perform preliminary experiments on real instances of the DA problem,
using actual display ad data for a set of anonymous publishers. As
with any real application, there are additional features of the
problem, and in the one we considered, both {\em fairness} and
efficiency were important metrics.  Hence, we also evaluated our
algorithms for fairness (see Section~\ref{sec:fair} for a precise
definition); we compared the efficiency and {fairness} of our
training-based algorithm with those of algorithms from \cite{WINE09}
designed for the adversarial setting, as well as hybrid algorithms
combining the two approaches.  We propose a new approach for
evaluating the fairness of an allocation, based on finding an
``ideal'' fair allocation, and measuring the distance to that
allocation.  Our experimental results validate Theorem~\ref{thm:main}
for this application, as they show that on this real data set,
training indeed helps efficiency by 5-12\%, and that the online
algorithms from~\cite{WINE09} are significantly better than a simple
greedy approach.

\subsection{Other Related Work} 

Our proof technique is similar to that of \cite{devanur-hayes} for the
AW problem; it is based on their observation that dual variables
satisfy the complemtary slackness conditions of the first $\eps$
fraction of impressions and {\em approximately} satisfy these
conditions on the entire set.  However, one key difference is that in
the AW problem, the coefficients for variable $x_{ij}$ in the linear
program are the {\em same} in both the constraint and the objective
function. That is, the contribution an impression makes to an
advertisers value is identical to the amount of budget it consumes. In
contrast, in the general class of packing problems that we study,
these coefficients are unrelated, which complicates the proof.

The random-order model has been considered for several problems, often
called \emph{secretary} problems. The elements arriving online are
often the ground set of an appropriate matroid, and the goal is to
find a maximum weight independent set in the matroids; such problems
include finding a maximum value set of $k$ elements
\cite{Kleinberg05}, or finding a maximum spanning forest in a graph
when edges appear online. Other secretary problems include finding a
maximum weight set of items that fits in a Knapsack. (See
\cite{BIKK08} for a survey of these and other results.)
Constant-competitive algorithms are known for these problems; without
additional assumptions (such as those of Theorem 1), no algorithm can
achieve a competitive ratio better than $1/e$.  Specifically for the
DA problem, the results of \cite{KP09} imply that the random-order
model permits a $1/8$-competitive algorithm even without using the
free disposal property or the conditions of Theorem~\ref{thm:main}.

There have been recent results regarding ad allocation strategies in
display advertising in hybrid settings with both contract-based
advertisers and spot market advertisers ~\cite{GRVZ09,GMPV09}.  Our
results in this paper may be interpreted as a class of {\em
  representative bidding strategies} that can be used on behalf of
contract-based advertisers competing with the spot market
bidders~\cite{GRVZ09}.  There are many other interesting problems in
ad serving systems related to information retrieval and data mining
\cite{broder-tutorial,broder-IR-1,broder-broadmatch} as well as
various optimal caching strategies~\cite{flavio1,flavio2}; our focus
in this paper is on online allocation problems.]

\bigskip
It was recently brought to our attention that subsequent to the
submission of an earlier version of this paper (including our main
result), similar results (obtained independently) were posted in a
working paper\cite{AWY09}.

\section{A Training-based PTAS}\label{sec:ptas}

In this section, we present the primal-dual training-based algorithm
for the online stochastic packing problem, and prove Theorem 1: That
is, under mild (practically-motivated) assumptions, the algorithm
achieves an approximation factor of $1-\eps$.

Our algorithm examines the first $\eps n$ agents in order
before solving a Linear Program to compute the posted prices used for
the remaining agents. This requires advance knowledge of the number of
agents that will arrive; Theorem~\ref{thm:lowerBound} at the end of
this section shows that this is unavoidable. Recall that there is a
set $I$ of ``agents''; agent $i \in I$ has a set of mutually exclusive
options $O_i$, and we use an indicator variable $x_{io}$ to denote
whether agent $i$ selects alternative $o \in O_i$. Each option for an
agent may have a different ``size'' in each constraint; we use
$a_{ioj}$ to denote the size in constraint $j$ of option $o$ for agent
$i$.
We use $w_{io}$ to denote the value from selecting option $o$ for agent
$i$, and $c_j$ is the ``capacity'' of constraint $j$.  That is, our
goal is to maximize $w^T x$ while picking at most one option for each
agent, and subject to $A x \le c$. Subsequently, we normalize $A, c$
such that $c$ is the all-1's vector, and write the (normalized) primal
linear program below.  We also use the dual linear program, which
introduces a variable $\beta_j$ for each constraint $j$.

\setlength{\columnsep}{0.5in} \setlength{\columnseprule}{0.3pt}
\begin{multicols}{2}
  \begin{center} \textbf{Primal-LP}
  \end{center}
  \begin{align*} 
     & \hspace{-0.35in} \max \sum_{i} \sum_{o \in O_i}w_{io} x_{io} & \\ 
    \sum_{o \in O_i} x_{io} \ \le & \quad 1 & (\forall \ i) \\
    \sum_{i,o} a_{ioj} x_{io} \ \le & \quad 1 & (\forall \ j) \\
    x_{io} \ \ge & \quad 0 & (\forall \ i,o) \\
  \end{align*}

  \begin{center} \textbf{Dual-LP}
  \end{center}
  \vspace{-0.15in}
  \begin{align*} 
    \min \sum_j \beta_j \ + & \sum_i z_i & \\
    z_i + \sum_j \beta_j a_{ioj}\ \ge & \quad w_{io} & (\forall i,o) \\
    \beta_j, z_i \ \ge & \quad 0 & (\forall i, j)
  \end{align*}
\end{multicols}

Let $n$ be the total number of agents, $q = \max_i |O_i|$ be the maximum
number of options for any agent, and $m$ be the number of
constraints. We say that the \emph{gain} from option $o \in O_i$ is
$w_{io} - \sum_j \beta^*_j a_{ioj}$.
The Training-Based Primal-Dual Algorithm proceeds as follows:

\begin{enumerate}
  \item Let $S$ denote the first $\eps n$ agents in the sequence. For
    the purposes purposes of analysis, these agents are not selected.
    (Our implementations may assign these impressions according to
    some online algorithm.)

  \item Solve the \dual on the agents in $S$, with the objective
    function containing the term $\eps \beta_j$ instead of $\beta_j$
    for each $j \in [m]$.  (This is equivalent to reducing the
    capacity of a constraint from 1 to $\eps$; we refer to this as a
    \emph{reduced instance}.) Let $\beta^*_j$ denote the value of the
    dual variable for constraint $j$ in this optimal solution.

  \item For each subsequent agent $i$, if there is an option $o$ with
    non-negative gain, select the option\footnote{Assume for
      simplicity that there are no ties, and so there is a unique such
      option. This can be effectively achieved by adding a random
      perturbation to the weights; we omit details from this extended
      abstract.} $o$ of maximum gain, and set $z_i = \gain(o)$.
\end{enumerate}

We will refer to a variant of this algorithm in
Section~\ref{sec:experimental} as the {\dualbase} algorithm.  The
intuition behind this algorithm is simple; the dual variables
$\beta^*_j$ can be thought of as specifying a value/size ratio
necessary for an option to be selected. An optimal choice for each
$\beta_j$ gives an optimal solution to the packing problem; this fact
is proven implicitly in the next section, where we further show that
with high probability, the optimal choice $\beta^*_j$ on the sample
$S$ leads to a near-optimal solution on the entire instance.  In the
following, let $w_{\max}= \max_{i,o}\{w_{io}\}$, and let $a_{\max} =
\max_{i,o,j}\{a_{ioj}\}$.

\subsection{Proof of Theorem~\ref{thm:main}} We now prove
Theorem~\ref{thm:main}, showing that the above training-based
algorithm is a polynomial-time approximation scheme. ( Proofs of some
claims are in Appendix~\ref{app:proofs}.)  Let $I^* \subseteq I$
denote the set of agents $i$ with some option $o$ having non-negative
gain, and let $\O^*$ denote the set of pairs $\{(i,o) \mid i \in I^*,
o = \arg \max_{o \in O(i)} \gain(o)\}$. We abuse notation by writing
$i \in \O^*$ if there exists $o \in O(i)$ such that $(i,o) \in
\O^*$. We use $\O^*(S)$ to denote $\O^* \cap S$; note that $\O^* -
\O^*(S)$ represents the options selected by the algorithm (for the
purposes of analysis, we do not select any options for agents in $S$).

Given a vector $\beta^*$, we obtain a feasible solution to \dual by
selecting for each item in $I^*$, the option $o$ such that $(i,o) \in \O^*$
and setting $z_i = \gain(o)$.

\begin{definition} 
  Let $W = \sum_{(i,o) \in \O^*} w_{io}$ be the total weight of
  selected options, and let $W(S) = \sum_{(i,o) \in \O^*(S)} w_{io}$.
  Let $C(j) = \sum_{(i,o) \in \O^*} a_{ioj}$ and $C(j,S) = \sum_{(i,o) \in
    \O^*(S)} a_{ioj}$.
\end{definition}

For any fixed vector $\beta^*$, $\O^*$ and hence $W$ and each $C(j)$
are independent of the choice of the sample $S$; the expected value of
$W(S)$ is $\eps W$, and that of $C(j,S)$ is $\eps
C(j)$.\footnote{Though $\beta^*$ depends on $S$, many distinct samples
  $S$ may lead to the same vector $\beta^*$. Also, we take
  expectations over \emph{all} choices of $S$, not just those leading
  to the given $\beta^*$.}  The main idea of the proof is that if
$\beta^*$ satisfies the complentary slackness conditions on the first
$\eps n$ impressions (being an optimal solution), w.h.p. it
\emph{approximately} satisfies these conditions on the entire
set. Thus, we conclude that the values of $W(S)$ and $C(j,S)$ are
likely to be close to their expectations.

\medskip \noindent
The following lemma proved by \cite{devanur-hayes}, an application of
the Chernoff-Hoeffding bounds, is of use:

\begin{lemma}[\cite{devanur-hayes}]\label{lem:Hoeffding} 
  Let $Y = \{Y_1, \ldots, Y_n\}$ be a set of real numbers, and let $0
  < \eps < 1$. Let $S$ be a random subset of $Y$ of size $\eps n$ and
  let $Y_S = \sum_{i \in S} Y_i$. For any $0 < \delta < 1$:
  \[\Prob{ \left| Y_S - \Exp{Y_S} \right| \ge \frac{2}{3} \|Y\|_\infty
    \ln \left(\frac{2}{\delta}\right) + \|Y\|_2 \sqrt{2\eps \ln
      \left(\frac{2}{\delta}\right)}} \le \delta
  \]
\end{lemma}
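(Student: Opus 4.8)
The shape of the claimed threshold---a term linear in $\ln(2/\delta)$ scaled by $\|Y\|_\infty$, plus a square-root term scaled by $\|Y\|_2$ and $\sqrt{\eps}$---is the unmistakable signature of a \emph{Bernstein-type} inequality inverted for a target failure probability $\delta$. So my plan is threefold: first reduce the without-replacement sampling to the easier with-replacement (i.i.d.) setting, then bound the variance of the sum, and finally apply Bernstein's inequality and invert the resulting tail bound.

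For the reduction I would invoke Hoeffding's classical observation that the sum over a simple random sample drawn \emph{without} replacement is no less concentrated than the corresponding sum of $\eps n$ i.i.d.\ draws made \emph{with} replacement; formally, the moment generating function of the former is dominated by that of the latter. Since every Chernoff/Bernstein tail bound is obtained by the exponential-moment method from an MGF estimate, any such bound proved for the with-replacement sum transfers verbatim to $Y_S$. This lets me work entirely with independent draws $X_1, \dots, X_{\eps n}$, each uniform on $\{Y_1, \dots, Y_n\}$, for which $\Exp{\sum_\ell X_\ell} = \eps \sum_i Y_i = \Exp{Y_S}$.

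Next I would bound the two parameters Bernstein requires. Using that the entries are nonnegative (as they are in every application here), each centered draw satisfies $|X_\ell - \mu| \le \|Y\|_\infty$, and each has variance at most $\tfrac1n \|Y\|_2^2$; summing over the $\eps n$ independent draws gives total variance $\sigma^2 \le \eps \|Y\|_2^2$. Bernstein's inequality then reads
\[
  \Prob{\,|Y_S - \Exp{Y_S}| \ge t\,} \le 2 \exp\!\left(-\frac{t^2}{2(\sigma^2 + \|Y\|_\infty t / 3)}\right).
\]

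Finally I would set the right-hand side equal to $\delta$, which, writing $L = \ln(2/\delta)$, amounts to solving the quadratic $t^2 - \tfrac23 \|Y\|_\infty L\, t - 2\sigma^2 L = 0$ for its positive root $t = \tfrac12\bigl(\tfrac23\|Y\|_\infty L + \sqrt{\tfrac49\|Y\|_\infty^2 L^2 + 8\sigma^2 L}\bigr)$, and then apply the elementary inequality $\sqrt{a+b} \le \sqrt a + \sqrt b$ to split the root into the additive form $\tfrac23 \|Y\|_\infty L + \sigma\sqrt{2L}$. Substituting $\sigma^2 \le \eps \|Y\|_2^2$ recovers exactly the stated threshold. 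I expect the main obstacle to be the reduction step---cleanly justifying that the without-replacement MGF is dominated, so that the with-replacement Bernstein bound is legitimately inherited---together with tracking the range and variance constants carefully (in particular the nonnegativity of the entries) so that the factor $\tfrac23$ and the $\sqrt{2\eps}$ emerge precisely rather than with extra slack.
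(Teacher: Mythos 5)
The paper never proves this lemma at all: it is imported verbatim, with attribution, from \cite{devanur-hayes}, so there is no in-paper argument for your proposal to diverge from. Your reconstruction is the standard derivation (and essentially the one in the cited source): Hoeffding's 1963 convex-ordering observation gives MGF domination of the without-replacement sum by the i.i.d.\ with-replacement sum, so any exponential-moment tail bound transfers; Bernstein's inequality with $\sigma^2 \le \eps n \cdot \tfrac{1}{n}\|Y\|_2^2 = \eps\|Y\|_2^2$ then applies; and inverting the quadratic with $\sqrt{a+b} \le \sqrt{a}+\sqrt{b}$ yields exactly the threshold $\tfrac{2}{3}\|Y\|_\infty \ln(2/\delta) + \|Y\|_2\sqrt{2\eps\ln(2/\delta)}$. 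I verified the algebra: the positive root of $t^2 - \tfrac{2}{3}\|Y\|_\infty L\,t - 2\sigma^2 L = 0$ is indeed at most $\tfrac{2}{3}\|Y\|_\infty L + \sigma\sqrt{2L}$, and since the Bernstein bound is decreasing in $t$, thresholding at anything larger keeps the failure probability below $\delta$.

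One caveat you correctly sensed but should make explicit: the constant $\tfrac{2}{3}$ (i.e., taking the range parameter to be $\|Y\|_\infty$ rather than $2\|Y\|_\infty$) requires the centered draws to satisfy $|X_\ell - \Exp{X_\ell}| \le \|Y\|_\infty$, which holds when all $Y_i$ have the same sign (e.g., are nonnegative) but \emph{not} for arbitrary reals, where the spread can be $2\|Y\|_\infty$ and the linear term would double. The lemma as stated says ``a set of real numbers,'' so strictly speaking your proof establishes it only for the nonnegative case; this is harmless for the paper, which invokes the lemma only with $Y_i = a_{ioj} \ge 0$ or $Y_i = w_{io} \ge 0$, but the restriction deserves a sentence in a final write-up.
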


\begin{definition}\label{def:bounds}
  For a sample $S$ and $j \in [m]$, let $r_j(S) = |C(j,S) - \eps
  C(j)|$, and let $t(S) = |W(S) - \eps W|$.  When the context is
  clear, we will abbreviate $r_j(S)$ by $r_j$ and $t(S)$ by $t$.
  \vspace{-2mm}
  \begin{enumerate}
  \item The sample $S$ is \emph{$r_j$-bad} if:\\
    $r_j \ge (m+1) (\ln n + \ln q) a_{\max} + \sqrt{C(j)} \cdot
    \left(2 \sqrt{\eps (m+1) (\ln n + \ln q) a_{\max}} \right)$.

  \item The sample $S$ is \emph{$t$-bad} if:\\
    $t \ge (m+1) (\ln n + \ln q) w_{\max} + \sqrt{W} \cdot \left( 2
      \sqrt{ \eps (m+1) (\ln n + \ln q) w_{\max}}\right)$.
  \end{enumerate}
\end{definition}

\begin{lemma}\label{lem:bounds}
  $\Prob{S \textrm{ is $r_j$-bad}} \le \frac{1}{m \cdot (nq)^{m+1}}$ for
  each $j$, and $ \Prob{S \textrm{ is $t$-bad}} \le
  \frac{1}{(nq)^{m+1}}$.
\end{lemma}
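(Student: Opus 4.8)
The plan is to obtain both bounds as direct consequences of the concentration inequality in Lemma~\ref{lem:Hoeffding}, by encoding $C(j,S)$ and $W(S)$ as subset-sums of fixed vectors and choosing the failure probability $\delta$ to match the right-hand sides in Definition~\ref{def:bounds}. The one subtlety I must address first is that $\O^*$ (and hence $C(j)$ and $W$) is determined by $\beta^*$, which itself depends on the sample $S$; to apply a concentration bound I treat $\beta^*$ as fixed, so that $\O^*$, $C(j)$, $W$, and the associated contribution vectors become deterministic and the only randomness is which of their coordinates land in $S$. As noted in the text preceding the lemma, for a fixed $\beta^*$ the expectations are $\Exp{C(j,S)} = \eps C(j)$ and $\Exp{W(S)} = \eps W$, taken over all choices of $S$; the eventual union bound over the (polynomially many in $nq$) distinct dual vectors is what forces the tiny target probabilities, and is handled where this lemma is invoked.

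For the $r_j$-bound I would set $Y = (Y_1, \dots, Y_n)$ with $Y_i = a_{ioj}$ if $(i,o) \in \O^*$ for agent $i$'s selected option and $Y_i = 0$ otherwise, so that $Y_S = \sum_{i \in S} Y_i = C(j,S)$ and $\Exp{Y_S} = \eps C(j)$. Then $\|Y\|_\infty \le a_{\max}$, and since each coordinate is at most $a_{\max}$ while $\sum_i Y_i = C(j)$, we get $\|Y\|_2^2 = \sum_i Y_i^2 \le a_{\max}\sum_i Y_i = a_{\max} C(j)$, i.e. $\|Y\|_2 \le \sqrt{a_{\max} C(j)}$. I then apply Lemma~\ref{lem:Hoeffding} with $\delta = \frac{1}{m(nq)^{m+1}}$, so that the event $\{r_j \ge \tfrac{2}{3}\|Y\|_\infty \ln(2/\delta) + \|Y\|_2\sqrt{2\eps\ln(2/\delta)}\}$ has probability at most $\delta$, which is exactly the claimed bound.

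It then remains to check that the Hoeffding deviation threshold is at most the $r_j$-bad threshold of Definition~\ref{def:bounds}, so that $\{S \text{ is } r_j\text{-bad}\}$ is contained in the low-probability event. Writing $L = (m+1)(\ln n + \ln q)$ and substituting the norm bounds, this reduces to two scalar inequalities: $\tfrac{2}{3} a_{\max}\ln(2/\delta) \le L\, a_{\max}$ for the linear term and $\sqrt{2\eps\ln(2/\delta)} \le 2\sqrt{\eps L}$ for the square-root term, both of which follow from $\ln(2/\delta) \le \tfrac{3}{2} L$. Since $\ln(2/\delta) = \ln(2m) + L$, this is the condition $\ln(2m) \le \tfrac{1}{2} L$, which holds for $n$ large enough relative to $m$ (as guaranteed under the hypotheses of Theorem~\ref{thm:main}).

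The $t$-bad case is identical with $Y_i = w_{io}$ for the selected option (so $Y_S = W(S)$), the bounds $\|Y\|_\infty \le w_{\max}$ and $\|Y\|_2 \le \sqrt{w_{\max}W}$, and $\delta = \frac{1}{(nq)^{m+1}}$; here $\ln(2/\delta) = \ln 2 + L$, so the analogous condition $\ln(2/\delta) \le \tfrac{3}{2} L$ is even easier to satisfy. The main thing to be careful about is the fixed-$\beta^*$ reduction in the first paragraph; once that is in place the rest is a mechanical matching of constants, and I would expect the slight overhead (the $\ln(2m)$ term, and the $\tfrac{2}{3}$ versus $1$ in the linear coefficient) to be absorbed precisely by the large-$n$ regime in which the theorem operates.
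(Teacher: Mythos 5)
Your proof is correct and follows essentially the same route as the paper's: apply Lemma~\ref{lem:Hoeffding} to the vector $Y_i = a_{ioj}$ (resp.\ $w_{io}$) for the selected option of agents in $\O^*$, use $\|Y\|_2 \le \sqrt{\|Y\|_1\,\|Y\|_\infty}$, and choose $\delta$ to match the claimed probability. You additionally spell out the constant-matching step (the $\tfrac{2}{3}$ vs.\ $1$ and $\sqrt{2}$ vs.\ $2$ slack, absorbed via $\ln(2/\delta) \le \tfrac{3}{2}(m+1)\ln(nq)$), which the paper dismisses with the remark that its coefficients are larger than necessary, and your choice $\delta = \frac{1}{m(nq)^{m+1}}$ is actually the one consistent with the lemma statement (the paper's proof writes $\frac{1}{m\cdot n^{m+1}}$, an apparent typo).
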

\begin{proof}
  To prove the first of these results, we simply apply
  Lemma~\ref{lem:Hoeffding} with $Y_i = a_{ioj}$ if $i \in \O^*$ and
  $0$ otherwise; we use $\|Y\|_2 \le \sqrt{\|Y\|_1 a_{\max}}$.  By
  setting $\delta = \frac{1}{m \cdot n^{m+1}}$, we obtain the desired
  result. (The coefficients are larger than necessary to keep the
  expression simple.)

  The proof of the second result is essentially identical, and hence
  omitted.
\end{proof}

We argue below that if $S$ is \emph{not} $t$-bad or $r_j$-bad for any
$j$, we obtain a good solution. We use the following simple
proposition:

\begin{proposition}\label{prop:cNear1}
  Let $j \in [m]$ be a constraint such that $C(j,S) = \eps$. If $S$ is
  not $r_j$-bad, we have $1 - 2\eps \le C(j) \le 1+3(\eps + \eps^2)$.
\end{proposition}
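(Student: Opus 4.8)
The plan is to convert the ``not $r_j$-bad'' hypothesis into a single clean inequality between $C(j)$ and $\eps$, and then solve a quadratic in $\sqrt{C(j)}$. First I would invoke condition~(2) of Theorem~\ref{thm:main}: since the instance has been normalized so that every $c_j = 1$, this says $a_{\max} \le \eps^3/\big((m+1)(\ln n + \ln q)\big)$, i.e. $(m+1)(\ln n + \ln q)\,a_{\max} \le \eps^3$. Substituting this bound (and the hypothesis $C(j,S) = \eps$, so that $r_j = |\eps - \eps\,C(j)|$) into Definition~\ref{def:bounds}, the assumption that $S$ is not $r_j$-bad gives
\[
  |\eps - \eps\,C(j)| \;<\; \eps^3 + 2\eps^2\sqrt{C(j)} ,
\]
and dividing through by $\eps > 0$ yields $|1 - C(j)| < \eps^2 + 2\eps\sqrt{C(j)}$.

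Next I would set $x = \sqrt{C(j)} \ge 0$ and split on the sign of $1 - x^2$. If $C(j) \le 1$, the inequality reads $1 - x^2 < \eps^2 + 2\eps x$, i.e. $(x+\eps)^2 > 1$, hence $x > 1 - \eps$ and therefore $C(j) > (1-\eps)^2 \ge 1 - 2\eps$; in this case the upper bound holds trivially since $C(j) \le 1$. If instead $C(j) > 1$, the inequality reads $x^2 - 2\eps x - (1+\eps^2) < 0$, whose positive root is $\eps + \sqrt{1 + 2\eps^2}$, so $x < \eps + \sqrt{1+2\eps^2} \le \eps + 1 + \eps^2$ (using $\sqrt{1+2\eps^2} \le 1+\eps^2$). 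Thus $C(j) < (1 + \eps + \eps^2)^2$, and a short expansion together with $\eps$ being small enough that $2\eps^2 + \eps^3 \le 1$ (e.g. $\eps \le 1/2$) shows $(1+\eps+\eps^2)^2 \le 1 + 3\eps + 3\eps^2 = 1 + 3(\eps + \eps^2)$; here the lower bound holds trivially since $C(j) > 1$. Combining the two cases gives exactly $1 - 2\eps \le C(j) \le 1 + 3(\eps+\eps^2)$.

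I do not expect a genuine obstacle: the only points to be careful about are applying condition~(2) in the normalized ($c_j = 1$) instance, keeping track of which case produces which of the two bounds, and the elementary estimates $\sqrt{1+2\eps^2} \le 1+\eps^2$ and $(1-\eps)^2 \ge 1 - 2\eps$. It is also worth remarking that the derived inequality $|1 - C(j)| < \eps^2 + 2\eps\sqrt{C(j)}$ already rules out $C(j) = 0$ for small $\eps$, so the two-case split is exhaustive. Once the ``not $r_j$-bad'' bound is simplified, the rest is a couple of lines of routine algebra.
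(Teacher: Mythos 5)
Your proof is correct and takes essentially the same route as the paper's: both substitute condition~(2) into the ``not $r_j$-bad'' bound to obtain $|\eps - \eps\,C(j)| \le \eps^3 + 2\eps^2\sqrt{C(j)}$ and then solve the resulting quadratic in $\sqrt{C(j)}$ (the paper labels this step ``simple algebra'' and omits the upper-bound case as similar, whereas you carry both cases out explicitly).
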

\begin{proof}
  To prove the former inequality, we use $C(j,S) - \eps C(j) \le (m+1)
  (\ln nq) a_{\max} + \sqrt{C(j)} \cdot \\
  \left(2 \sqrt{\eps (m+1) (\ln nq) a_{\max}} \right)$. As $a_{\max}
  \le \eps^3/\left((m+1) (\ln nq) \right)$, we have $\eps - \eps C(j)
  \le \eps^3 + \sqrt{C(j)} \cdot 2 \eps^2$; simple algebra now yields
  the desired result. The proof of the upper bound is similar, and so
  omitted.
\end{proof}

\begin{lemma} If the sample $S$ is not $t$-bad or $r_j$-bad for any
  constraint $j$, the value of the options selected by the algorithm
  is $(1 - O(\eps)) \opt$.
\end{lemma}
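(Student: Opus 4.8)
The plan is to compare the value $W$ collected by the algorithm against $\opt$ by exhibiting the pair $(\beta^*, z)$ as a feasible (or near-feasible) dual solution and invoking weak duality, while carefully accounting for the value lost on the sample $S$ and the value lost because $\beta^*$ was optimal only for the \emph{reduced} instance (capacity $\eps$) on $S$. First I would record the dual objective value: since $\beta^*$ is optimal for the reduced instance on $S$ and $z_i = \gain(o)$ for the selected option, complementary slackness on $S$ gives that the reduced-instance dual objective $\eps\sum_j\beta^*_j + \sum_{i\in\O^*(S)} z_i$ equals the reduced-instance primal optimum, which is at most $W(S)$ plus a $\sum_j \eps\beta^*_j$ term bookkeeping. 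The key algebraic identity to extract is $\sum_j \beta^*_j + \sum_{i\in\O^*} z_i = W - \sum_j \beta^*_j(C(j)-1)$ combined with $\sum_{i\in\O^*} w_{io} = \sum_{i\in\O^*} z_i + \sum_j \beta^*_j C(j)$ (just summing the definition of gain over selected options). Since on $S$ we have $C(j,S)=\eps$ whenever $\beta^*_j>0$ (complementary slackness for the reduced instance), Proposition~\ref{prop:cNear1} tells us $C(j)\le 1+3(\eps+\eps^2)$ for those $j$, so $\sum_j\beta^*_j(C(j)-1) \le 3(\eps+\eps^2)\sum_j\beta^*_j$, and one bounds $\sum_j\beta^*_j$ by $O(\opt)$ using the reduced-instance optimality plus the not-$t$-bad condition.

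Next I would handle feasibility of the algorithm's primal solution $\O^*-\O^*(S)$: a priori the algorithm might overshoot a constraint, since it greedily selects positive-gain options without re-checking capacity. The not-$r_j$-bad hypothesis and Proposition~\ref{prop:cNear1} show that for every tight constraint $C(j)\le 1+3(\eps+\eps^2)$, and for non-tight constraints $\beta^*_j=0$; either way the \emph{total} usage $C(j)$ of resource $j$ across all selected options exceeds $1$ by at most an $O(\eps)$ fraction, so we can discard an $O(\eps)$-fraction of selected options (e.g.\ the lowest-value ones, or a random sample) to restore feasibility while losing only an $O(\eps)$ fraction of $W$. I would phrase this as: the greedy selection attains value $W - W(S)$, which after trimming yields a feasible solution of value $(1-O(\eps))(W-W(S))$.

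Finally, I would chain the inequalities. From the dual side, $\opt \le \sum_j\beta^*_j + \sum_{i\in\O^*} z_i + (\text{a slack term from extending }\beta^* \text{ to all agents})$; here one uses that every agent not in $\O^*$ has all options with negative gain, so $z_i=0$ is dual-feasible for them and $\beta^*$ together with these $z_i$ is genuinely dual-feasible for the whole instance, giving $\opt \le \sum_j\beta^*_j + \sum_{i\in\O^*} z_i \le W + O(\eps)\opt$ after the $C(j)-1$ bound above. Conversely $W \ge \eps\text{-adjusted lower bound}$: since $W(S)\le \eps W + t$ with $t$ small (not $t$-bad) and $\eps W \le \eps\cdot O(\opt)$, the value the algorithm keeps is $(1-O(\eps))(W - W(S)) \ge (1-O(\eps))W - O(\eps)\opt \ge (1-O(\eps))\opt$. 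Assembling these gives the claim.

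\medskip\noindent\textbf{Main obstacle.} The delicate point is the two-sided control of $\sum_j\beta^*_j C(j)$: on the sample this equals $\eps\sum_j\beta^*_j$ by complementary slackness for the reduced instance, but on the full instance $C(j)$ has drifted, and we need both that it has not drifted \emph{up} too much (for primal feasibility and for bounding $\opt$ from above via the dual) and that the quantity $\sum_j\beta^*_j$ itself is $O(\opt)$ so that an $O(\eps)$-relative drift is an $O(\eps)\opt$ additive error. Getting $\sum_j\beta^*_j = O(\opt)$ requires going back to the reduced-instance LP on $S$: its optimum is at most $W(S) \le \eps\opt + (\text{not-}t\text{-bad slack})$, and $\eps\sum_j\beta^*_j$ is a lower-order part of that optimum only after subtracting the $\sum_i z_i$ contribution, which needs the gain decomposition to be done cleanly. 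This interplay between the reduced instance on $S$ and the true instance on $I$ — essentially the place where the coefficients $w_{io}$ and $a_{ioj}$ being unrelated (unlike in AdWords) bites — is where the real work lies; the Chernoff bounds (Lemma~\ref{lem:bounds}) and Proposition~\ref{prop:cNear1} have already isolated the probabilistic content, so the remaining difficulty is purely this bookkeeping.
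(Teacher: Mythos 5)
Your overall architecture matches the paper's: treat $(\beta^*, z)$ with $z_i=\gain(o)$ on $\O^*$ and $z_i=0$ elsewhere as a feasible dual solution $D=\sum_j\beta^*_j+\sum_{i\in\O^*}z_i\ge\opt$, use the identity $W=D+\sum_j\beta^*_j\bigl(C(j)-1\bigr)$, and then subtract off $W(S)$ using the not-$t$-bad condition. But there is a direction error at the crux. To conclude $W\ge(1-O(\eps))\opt$ you need $D\le W+O(\eps)\opt$, i.e.\ $\sum_j\beta^*_j\bigl(1-C(j)\bigr)\le O(\eps)\opt$, and this requires the \emph{lower} half of Proposition~\ref{prop:cNear1}: $C(j)\ge 1-2\eps$ for every $j$ with $\beta^*_j>0$ (via complementary slackness, $C(j,S)=\eps$ for such $j$). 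You instead invoke only the upper half, $C(j)\le 1+3(\eps+\eps^2)$, which yields $\sum_j\beta^*_j\bigl(C(j)-1\bigr)\le O(\eps)\sum_j\beta^*_j$ and hence $W\le D+O(\eps)\opt$ --- an upper bound on $W$, which says nothing about the lemma. Your ``main obstacle'' paragraph confirms this is not a typo: you state that the upward drift of $C(j)$ is what threatens ``bounding $\opt$ from above via the dual,'' when in fact it is the \emph{downward} drift that does (if $C(j)\ll 1$ on a constraint with large $\beta^*_j$, then $D$ can vastly exceed $W$ and weak duality gives nothing); the upward drift only threatens primal feasibility. The paper's proof uses exactly the lower bound, writing $W=\sum_i z_i+\sum_j\beta^*_jC(j)\ge\sum_i z_i+(1-2\eps)\sum_{j:\beta^*_j>0}\beta^*_j\ge(1-2\eps)D$, which also shows your detour through $\sum_j\beta^*_j=O(\opt)$ is unnecessary (nonnegativity of the $z_i$ suffices).

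A secondary weak point: your feasibility repair --- discard the lowest-value selected options, or a random $O(\eps)$-fraction --- is not sound as stated, since low value does not imply high consumption of the specific overfull resource $j$, and a random discard only controls $C(j)$ in expectation. The paper instead shrinks every capacity by a factor $1+O(\eps)$ \emph{before} running the algorithm, which costs only a $1+O(\eps)$ factor in $\opt$ (scale the LP solution) and guarantees the original capacities are respected whenever $S$ is not $r_j$-bad.
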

\begin{proof} 
  Let $D = \sum_j \beta^*_j + \sum_{i \in \O^*} z_i$ be the value of
  the feasible dual solution obtained by setting $z_i = \gain(o)$ for
  each $(i,o) \in \O^*$; by weak duality, $D$ is an upper bound on
  $\opt$. We show that $\sum_{(i,o) \in \O^* - \O^*(S)} w_{io} \ge (1
  - O(\eps)) D$, which completes the proof.

  First, we show that $W \ge (1-2\eps) D$. Let $J_1$ denote the set of
  constraints $j \in m$ such that $\beta^*_j > 0$, and $J_2 = [m] -
  J_1$ be the set of constraints such that $\beta^*_j = 0$. For each
  constraint $j \in J_1$, complementary slackness and
  Proposition~\ref{prop:cNear1} imply that if $S$ is not
  $r_j$-bad,$C(j) \ge 1 - 2\eps$.
  \begin{eqnarray*}
    W &=& \sum_{(i,o) \in \O^*} w_{io} = \sum_{(i,o) \in \O^*} \left( z_i + \sum_{j}
      a_{ioj} \beta^*_j \right) = \sum_{i \in I^*} z_i + \sum_j
    \left(\beta^*_j \sum_{(i,o) \in \O^*} a_{ioj} \right)\\
    & = & \sum_{i\in I^*} z_i + \sum_j \beta^*_j C(j)  \ge
    \sum_{i \in I^*} z_i + \sum_{j \in J_1} \beta^*_j (1 - 2\eps) \ge (1-2\eps)D
  \end{eqnarray*}

  where the penultimate inequality follows from the fact that for $j
  \in J_2$, $\beta^*_j = 0$, and for each $j \in J_1$, $C(j) \ge 1 -
  2\eps$.
  
  Now, the total value obtained by the algorithm is $W - W(S)$ (as the
  options for agents in $S$ were not selected); as $S$ is not $t$-bad,
  we have $W(S) \le \eps W + (m+1) (\ln nq) w_{\max} + 2 \sqrt{W}
  \sqrt{ \eps (m+1) (\ln nq) \cdot w_{\max} }$. But we have $(m+1) (\ln
  nq) w_{\max} \le \eps \opt$, and hence $W(S) \le \eps W + \eps \opt
  + 2 \sqrt{W} \sqrt{\eps^2 \opt} \le O(\eps W)$.  That is, the value
  obtained by the algorithm is at least $(1 - O(\eps)) W$, which is $(1
  - O(\eps)) \opt$.
\end{proof}

Note that the options selected by the algorithm, as described above,
may not be feasible even if $S$ is not $r_j$-bad;
Proposition~\ref{prop:cNear1} only implies that $C(j) \le
1+3(\eps+\eps^2)$. Thus, we might violate some constraints by a small
amount. This is easily fixed: simply decrease the capacities of all
constraints by a factor of $1+O(\eps)$. This reduces the value of the
optimal solution by no more than the same factor, as we can scale down
each $x_{io}$ by this factor to obtain a feasible solution with the
reduced capacities. Though our algorithm might violate the reduced
capacity of constraint $j$ by a factor of $1+O(\eps)$, we respect the
original capacity when $S$ is not $r_j$-bad. Thus, when $S$ is not
$t$-bad or $r_j$-bad for any $j$, we obtain a feasible solution with
value $(1-O(\eps))\opt$.

\bigskip
Finally, Lemma~\ref{lem:bounds} implies that for any fixed $\beta^*$,
the probability that a random sample $S$ of impressions is bad is less
than $\frac{2}{(nq)^{m+1}}$. The following lemma shows that there are
at most $(nq)^m$ distinct choices for $\beta^*$; as a result, the
sample is good for any $\beta^*$ with high probability. Therefore,
with high probability, our algorithm returns a feasible solution with
value at least $(1-O(\eps)) \opt$, proving Theorem~\ref{thm:main}.

\begin{lemma}\label{lem:fewSolutions}
  There are fewer than $(nq)^m$ distinct solutions $\beta^*$ that are
  returned by the algorithm after step $2$.
\end{lemma}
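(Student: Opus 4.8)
The plan is to show that, ranging over all possible samples $S$, the dual vector $\beta^*$ returned in step 2 can take at most $(nq)^m$ distinct values, which is precisely what is needed to combine Lemma~\ref{lem:bounds} with a union bound over these values. The first move is to eliminate the $z_i$ variables: for any fixed $\beta$, the optimal completion in \dual is $z_i = \max\{0, \max_{o \in O_i}(w_{io} - \sum_j \beta_j a_{ioj})\}$, so the reduced dual objective is the convex, piecewise-linear function $g(\beta) = \eps\sum_j \beta_j + \sum_{i \in S} z_i(\beta)$, to be minimized over $\beta \ge 0$. Consequently each optimal dual solution is determined by its $\beta$-part alone, and we may take $\beta^*$ to be a vertex of the dual polyhedron (equivalently, a vertex of the arrangement of hyperplanes defining $g$).

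Next I would argue that such a vertex $\beta^* \in \mathbb{R}^m$ is pinned down by $m$ linearly independent tight constraints. After the $z_i$ are substituted out, every tight constraint that can determine $\beta^*$ is either a non-negativity constraint $\beta_j = 0$ or arises from a gain constraint $z_i + \sum_j \beta_j a_{ioj} = w_{io}$ indexed by an (agent, option) pair $(i,o)$ with $i \in S$. A choice of $m$ such linearly independent equations has at most one common solution, so the number of candidate vertices is at most the number of ways to select $m$ constraints from the available pool.

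The key observation that makes the counting uniform over $S$ is that this pool is drawn from a fixed universe independent of the sample: every gain constraint is indexed by a pair $(i,o)$ with $i \in I$ and $o \in O_i$, and there are at most $\sum_{i \in I}|O_i| \le nq$ such pairs in total. Hence, no matter which $S$ is drawn, the resulting $\beta^*$ is the solution of a system of $m$ equations chosen from this fixed collection of at most $nq$ gain constraints (the $m$ constraints $\beta_j = 0$ only perturb the count by lower-order factors, absorbed into the loose stated bound), giving at most $(nq)^m$ distinct $\beta^*$ overall. This fixed-universe point is exactly what I expect to be the crux: it is what legitimizes a single union bound over all $\beta^*$ even though $S$---and therefore the dual LP itself---varies. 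The main technical care is the elimination of the $z_i$ variables together with the verification that a vertex of the reduced problem is determined by $m$ tight constraints of the stated forms; once this structural claim is secure, the count and the concluding union bound (using the per-$\beta^*$ bound of $2/(nq)^{m+1}$ from Lemma~\ref{lem:bounds}) follow immediately.
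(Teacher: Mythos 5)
Your proof is correct and takes essentially the same approach as the paper's: both arguments count vertices of the dual polytope projected down to $\beta$-space, bounding them by the number of ways to choose $m$ tight constraints from a pool of at most $nq$ gain constraints (one per $(i,o)$ pair over \emph{all} $n$ agents), yielding at most $\binom{nq}{m} < (nq)^m$ possibilities. Your write-up is in fact more explicit than the paper's terse proof --- in particular the elimination of the $z_i$ variables, the fixed-universe observation that legitimizes a single union bound even though the sampled LP varies with $S$, and the handling of the $\beta_j \ge 0$ constraints are all left implicit in the paper.
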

\begin{proof}
  Recall that an optimal (vertex) solution to the \dual on the reduced
  instance is defined purely by the $m$-dimensional vector $\beta^*$.
  The polytope defined by optimal solutions $\beta^*$ is defined by
  the constraints of the \dual, projected down to linear inequalities
  in $m$ dimensions. Since there are at most $q$ such constraints for
  each of the $n$ agents, there are at most $\binom{nq}{m}$ possible
  vertices of the polytope defined by optimal solutions $\beta^*$.
\end{proof}

\begin{theorem}\label{thm:lowerBound}
  Even in the full-information model, where $n$ agents drawn
  i.i.d. from a \emph{known} distribution arrive online, there is no
  $o(\log n/\log \log n)$-approximation for the online stochastic PLP
  unless the number $n$ of draws from the distribution is known in
  advance.
\end{theorem}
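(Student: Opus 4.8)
The plan is to build one \emph{fixed, known} i.i.d.\ distribution over agent types and let the adversary choose only the horizon $n$; Yao's principle then reduces matters to lower-bounding the expected ratio $\Exp{\opt}/\Exp{\mathrm{ALG}}$ of the best deterministic online algorithm that is unaware of $n$. I would use a single resource ($m=1$) of capacity $1$ in which every non-null option consumes the whole resource, so that at most one agent is ever served and $\opt$ equals the largest value that arrives. There are $K$ types: type $k$ has value $v_k=2^{k}$ and is drawn with probability $p_k=c/n_k$, where $n_k=k^{2k}$ and the remaining mass goes to a null (value-$0$) type. These thresholds are chosen so that at horizon $n_k$ a type-$k$ agent appears with constant probability ($n_k p_k=c$) while every higher type is absent with high probability ($n_k p_{k+1}=c\,n_k/n_{k+1}=\Theta(k^{-2})\to 0$). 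The adversary draws the horizon as $n=n_k$ with probability $\pi_k\propto 2^{-k}$. Writing $n=n_K$, we have $\log n=2K\log K$, so $K=\Theta(\log n/\log\log n)$.

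Lower-bounding the optimum is immediate: conditioned on $n=n_k$, with constant probability the best arriving agent has type exactly $k$, so $\Exp{\opt\mid n=n_k}=\Theta(v_k)$; since $\pi_k v_k=\Theta(1)$ for every $k$, summing the $K$ levels gives $\Exp{\opt}=\Theta(K)$. The heart of the argument is to show that \emph{every} deterministic online algorithm collects only $\Exp{\mathrm{ALG}}=O(1)$. Run any such algorithm on a single coupled infinite stream; it is described by one stopping time $\tau$, and at a realized horizon $n$ its value is $v_{\mathrm{type}(\tau)}\,\mathbf{1}[\tau\le n]$. Averaging over the adversary's $n$ gives $\Exp{\mathrm{ALG}}=\Exp{v_{\mathrm{type}(\tau)}\,\Pi(\tau)}$, where $\Pi(t)=\Prob{n\ge t}$ is the geometrically decaying tail of the horizon law. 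The key structural fact is an \emph{envelope} bound: because the accepted type $j$ must have appeared by time $\tau$ (its first-appearance time $\sigma_j$ satisfies $\sigma_j\le\tau$) and $\Pi$ is non-increasing, $v_{\mathrm{type}(\tau)}\Pi(\tau)\le\max_j v_j\Pi(\sigma_j)$. Geometric value growth is exactly matched by geometric tail decay, so at the \emph{typical} appearance time $\sigma_j=\Theta(n_j)$ we get $v_j\Pi(\sigma_j)=\Theta(v_j\pi_j)=\Theta(1)$.

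It remains to bound $\Exp{\max_j v_j\Pi(\sigma_j)}$, and this is the step that requires care: a high-value type $j$ appearing far earlier than its threshold inflates $v_j\Pi(\sigma_j)$. I would control this through the tail of the maximum, using the crude union bound $\Prob{\sigma_j\le n_{j-s}}\le n_{j-s}\,p_j=c\,n_{j-s}/n_j$ (no independence of the $\sigma_j$ is needed). Since $v_j\Pi(\sigma_j)\ge 2^{s}$ forces $\sigma_j\le n_{j-s}$ (up to an $O(1)$ shift in the index), a dyadic integration of $\Prob{\max_j v_j\Pi(\sigma_j)\ge 2^s}\le\sum_j c\,n_{j-s}/n_j$ produces a double sum whose convergence is precisely what the super-polynomial growth $n_k=k^{2k}$ buys: the consecutive ratios $n_{k-1}/n_k=\Theta(k^{-2})$ are summable (any exponent $>1$ in $k^{ck}$ suffices), so $\Exp{\max_j v_j\Pi(\sigma_j)}=O(1)$ and hence $\Exp{\mathrm{ALG}}=O(1)$.

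Combining the two estimates, $\Exp{\opt}/\Exp{\mathrm{ALG}}=\Omega(K)=\Omega(\log n/\log\log n)$, and Yao's principle lifts this to a lower bound against all randomized algorithms that do not know $n$. The main obstacle is exactly the upper bound on $\Exp{\mathrm{ALG}}$: one must defeat \emph{adaptive} strategies that could, in principle, retune their target value from the observed prefix. The envelope observation is what rules this out—whenever the algorithm commits, the product (best value currently available) $\times$ (probability the stream has not yet ended) is $O(1)$, because a more valuable item arriving later is geometrically less likely to be reachable before the horizon; all that the delicate choice of $n_k$ does is guarantee that the rare early arrivals of high-value types cannot, in aggregate, spoil this $O(1)$ envelope.
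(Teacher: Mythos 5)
Your proposal takes a genuinely different route from the paper's proof, and as a proof of the \emph{literal} statement it is essentially sound. You collapse the PLP to a single-choice stopping problem (one resource, every non-null option consumes all of it), randomize the horizon geometrically, and defeat all deterministic algorithms at once via the envelope bound $v_{\mathrm{type}(\tau)}\Pi(\tau)\le\max_j v_j\Pi(\sigma_j)$ together with the union bound over early arrivals of high-value types; the spacing $n_k=k^{2k}$ does make the dyadic sum converge, and $\Exp{\opt}=\Theta(K)$ against $\Exp{\mathrm{ALG}}=O(1)$ with $K=\Theta(\log n/\log\log n)$ gives the ratio. The paper argues differently: it fixes a deterministic algorithm, lets the adversary choose the worst horizon, and runs a cascading capacity argument on an instance with one resource of capacity $3T\log T$, unit demands, values $T^{2i}$ and arrival probabilities $\approx T^{-2i}$: to remain $o(T)$-competitive the algorithm must have selected at least $3\log T$ agents of each value scale $k$ by time $6T\log T\cdot T^{2k}$, and summing these commitments over the $T-1$ scales exhausts the capacity, so the top scale cannot be served. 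One technical caveat for your write-up: the paper defines an $\alpha$-approximation as a with-high-probability guarantee on the realized instance, not a ratio of expectations, so you need the (routine) conversion: an $\alpha$-approximation at every horizon forces $\Exp{\mathrm{ALG}\mid n=n_k}=\Omega(\alpha 2^k)$, hence $\Exp{\mathrm{ALG}}=\Omega(\alpha K)$, contradicting $\Exp{\mathrm{ALG}}=O(1)$ unless $\alpha=O(1/K)$.

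The substantive difference, and the reason your construction is weaker than the paper's in context, is that your hard instances violate the hypotheses of Theorem~1 as badly as possible: a single agent consumes the entire resource and constitutes the entire optimum. The paper's construction is deliberately engineered so that, in its own words, ``the hypotheses of Theorem 1 will hold, as no item contributes too much to the value of the optimal solution or uses too much of the shared resource''; this is what lets the lower bound play its intended role of showing that the training-based PTAS's advance knowledge of $n$ is unavoidable \emph{in the very regime where that PTAS achieves $1-\eps$}. On your instances, even an algorithm that knows $n$ cannot do better than a constant factor (the prophet-inequality barrier for single-choice problems), so your bound does not isolate knowledge of $n$ as the bottleneck, and it leaves open exactly the question the theorem is meant to settle: whether, under the small-contribution conditions of Theorem~1, an algorithm ignorant of $n$ could still be near-optimal. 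Note also that repairing this is not cosmetic: once agents are small relative to capacity, an algorithm is no longer a single stopping time and your envelope argument has no direct analogue, which is precisely why the paper resorts to the capacity-exhaustion cascade.
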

\begin{proof}  
  The intuition behind this proof is simple: The distribution may
  contain agents with very high value, but that arrive with low
  probability.  If there are many draws from the distribution, it is
  likely that such agents will arrive, and so some amount of resources
  should be reserved for them. On the other hand, an algorithm that
  reserves resources for low-probability events will waste a large
  fraction of its resources if there are only a few draws from the
  distribution.

  \medskip
  Fix $T \gg 1$; consider a problem with $3 T \log T$ units of a
  single resource, and every agent wishing precisely 1 unit of this
  resource. There are $T$ types of agents; agents of type $i \in
  \{0,\ldots T-1\}$ have value $T^{2i}$ for receiving a unit of
  resource. The probability of drawing an agent of type $i$ is
  $\approx \frac{1}{T^{2i}}$.  (Normalize these probabilities so they
  sum up to $1$; this changes the probabilities by a factor of
  $\approx 1 - (1 \over T^2)$, which we ignore for ease of
  exposition.)  Thus, the distribution of agents is known to the
  algorithm in advance.

  However, the algorithm \emph{does not know} how many agents will be
  drawn from this distribution. Suppose the number of draws is $6 T
  \log T \cdot T^{2j}$, for some $j \in \{0,\ldots, T-1\}$. It is easy
  to see that there will be very likely be more than $3 T \log T$
  agents of type $j$, and no agents of type $j+1$ or higher. Thus, the
  optimal solution has value $3 T \log T \cdot T^{2j}$; the hypotheses
  of Theorem 1 will hold, as no item contributes too much to the value
  of the optimal solution or uses too much of the shared resource.
  
  \medskip 

  Now consider any deterministic algorithm. If, for any $k \le j$, it
  has selected fewer than $3 \log T$ agents of type $k$ after $6 T
  \log T \cdot T^{2k}$ draws, it has a solution of value less than $3
  \log T \cdot T^{2k}$ (from agents of type $k$) plus $3T \log T \cdot
  T^{2k-2}$, which is $3 \log T \cdot T^{2k} (1 + o(1))$; this is
  roughly a factor of $T$ smaller than the optimal solution, which has
  value $3 T \log T \cdot T^{2k}$.  Thus, to maintain a $o(T)$
  competitive ratio, it must have selected at least $3 \log T$ agents
  of type $k$ after $6T \log T \cdot T^{2k}$ draws, as there may be no
  subsequent agents. However, this implies that after $6T \log T \cdot
  T^{2(T-2)}$ draws, the algorithm must have selected at least one
  agent from each of types $\{0,\ldots, (T-2)\}$. But there are $3
  \log T \cdot (T-1)$ such agents that must have been selected, each
  using a unit of the resource.  Therefore, no more than $3 \log T$
  agents of type $T-1$ can be selected. But if there are $6 T \log T
  \cdot T^{2(T-1)}$ draws, the optimal solution has value $3 T \log T
  \cdot T^{2(T-1)}$, and the algorithm has value no more than $3 \log
  T \cdot T^{2(T-1)} (1+o(1))$, which is less by roughly a factor of
  $T$ less.

  \medskip

  Thus, there is no $o(T)$ competitive algorithm, and the number of
  draws is at most $O(T^{2T-1} \log T)$. That is, if $n$ denotes the
  number of draws, there is no $o(\log n / \log \log n)$-competitve
  algorithm.  Using Yao's minimax principle, a similar argument can be
  extended to show that no randomized algorithm can obtain good
  approximations; we omit details from this extended abstract.
\end{proof}

\section{Display Ad Allocation and Fairness}\label{sec:fair} 

Other metrics besides efficiency play an important role in measuring
the quality of an allocation. In this section, we focus on the Display
Ad Allocation (DA) problem. Recall that in the DA problem, a set $J$
of $m$ advertisers have paid a website publisher in advance for their
ads to be shown to visitors to the website; for each advertiser $j \in
J$, their contract specifies an upper bound $n(j)$ on the number of
impressions they wish to pay for.  Each agent/impression has a set of
$m$ options corresponding to the $m$ advertisers, and must be assigned
to a single advertiser. If we assign impression $i$ to advertiser $j$,
it occupies one $j$'s $n(j)$ slots, and we obtain value $w_{ij}$.

In addition to the overall efficiency of the allocation, an important
consideration is its \emph{fairness} to the various advertisers; An
advertiser who does not get his ``fair share'' of impressions is
unlikely to purchase further contracts for impressions in the future.
Here, we propose a metric to capture the fairness of an allocation and
present algorithms to compute it.

Qualitatively, an allocation is ``fair'' if the advertisers are
treated fairly relative to each other.  As opposed to efficiency,
which is easily quantified as the sum of individual advertiser values,
fairness is more problematic, as it is inherently a relative (rather
than purely additive) measure.  One natural option is to consider
``max-min'' fairness, where the goal is to maximize the minimum
efficiency among the
advertisers~\cite{KRT01,KK06,LMMS04,BS06,AS07,BCG09,CCK09}.  While
useful in some contexts, in this application max-min fairness gives
too much attention to the most difficult-to-satisfy advertiser,
abandoning overall performance.  Given the diversity of demands,
impression targeting criteria and edge weights, a more flexible
fairness measure is needed.  In addition, the total weight of
impressions assigned to an advertiser depends not only on the eligible
set of impressions for that advertiser, but also the competition among
advertisers, i.e., if many advertisers are eligible for the same set
of (high-quality) impressions, none of these advertisers can get all
of these impressions, and these (high-quality) impressions should be
divided in some manner among the eligible advertisers.

Since this competition is intimately related to the structure of the
instance, it is difficult to quantify fairness in this context in a
universal way; thus, in order to define a fairness measure capturing
the above aspects, we first define an {\em ideal (offline) fair
  allocation} by taking into account advertisers competing for the
same set of impressions.  We define this allocation algorithmically,
i.e., it is a function of the problem instance.  We then compute the
fairness of an arbitrary assignment of impressions to advertisers by
computing the distance of this allocation to this ideal fair
allocation.

More precisely, we define the fairness measure as follows: Given an
allocation $x_{ij}$ of impressions $i$ to advertisers $j$, let $v_j(x)
= \sum_{i\in I} w_{ij}x_{ij}$ for each $j\in J$ denote the value
assigned to advertiser $j$.  The $v_j(x)$ can be defined for both 0/1
and fractional allocations $x$ in which $0\le x_{ij}\le 1$.  (In a
fractional allocation, the advertisers ``share'' the impression, which
one could interpret as a random allocation according to the implied
distribution.) For an allocation $x$, we roughly define the fairness
metric as the $l_1$ distance between $x$ and some ideal allocation
$x^*$, but where $x$ is normalized (scaled linearly) so that it has
the same efficiency as $x^*$. This scaling ensures that $x$ is judged
purely based on its relative efficiency among advertisers, rather than
on absolute efficiency.  We scale $x$ to match $x^*$ (rather than the
other way around) so that we may compare the fairness of different
allocations with a universal scale.  Formally, for an allocation $x$,
let $V(x) = \sum_{j \in J} v_j(x)$.  We define the {\em fairness
  measure} $f(x)$ as
\[
f(x) = \sum_{j\in J} \bigg \vert \frac{V(x^*)}{V(x)} v_j(x) - v_j(x^*)
\bigg \vert.
\]

Thus, the smaller $f(x)$ the fairer is allocation $x$.  Now, in order
to complete the definition of the fairness measure, it remains to
define the offline ideal fair allocation ${x^*}$.

\subsection{Offline Fair Allocations}
In this section, we discuss various natural offline fair allocations
${x^*}$ that can be used in the definition of fairness measure
defined above. As we discussed earlier, such ideal fair allocation
depends on the eligible set of impressions, and the set of advertisers
competing for the same impressions. Let $I(j)$ be the set of eligible
impressions for advertiser $j$ with demand $n(j)$. Assuming that
weights $w_{ij}$ capture the quality/relevance of impression $i$ for
advertiser $j$, in an ideal situation, advertiser $j$ would like to
get all the $n(j)$ impressions in $I(j)$ with the maximum weight. In
other words, ordering impressions in $I(j)$ in the non-increasing
order of their weight to $j$, advertiser $j$ would ideally want to get
a {\em prefix} of $n(j)$ impressions in this order. However, it might
not be possible for each advertiser $j$ to get a prefix of the first
$n(j)$ impressions in his ideal order, since an impression $i$ may
appear in the prefix of several advertisers. In such situations, we
should resolve the conflict (competition) of {\em interested
  advertisers} for this impression $i$ in a {\em fair} way, and extend the prefix of the affected advertisers.

Since we allow the offline fair allocation $x^*$ to be fractional,
this competition may be resolved by {\em sharing} each impression
among all interested advertisers.  A natural fair way of sharing an
impression $i$ among a set $J(i)$ of interested advertisers is to
divide this impression $i$ {\em equally} among all advertisers in
$J(i)$, i.e, each advertiser $a\in J(i)$ gets a fraction ${1\over
  \vert J(i) \vert}$ of impression $i$.  We call this method the {\em
  equal sharing} method (we discuss other sharing methods later.)

Given an arbitrary sharing policy like the {\em equal sharing policy}
defined above, we formally define the notion of a fair allocation
$x^*$ in terms of this policy:

\begin{definition}\label{def:fair}
  Let $H$ be a {\em sharing policy} mapping the advertisers $j$
  interested in impression $i$ to a fractional allocation
  $\{x_{ij}\}_{j \in J}$.  A fractional allocation ${x}^*$ is {\em
    fair under $H$}, if
\begin{itemize}
\item for each advertiser $j$, the set of impressions that $j$ is
  interested in is a prefix of all impressions (ordered by $w_{ij}$),
\item the allocation $x^*$ represents the policy $H$ applied to each
  impression, and
\item each advertiser is either interested in all impressions, or is
  receiving at least $n(j)$ impressions under ${x}^*$.
\end{itemize}
\end{definition}

An alternate way of thinking of a fair allocation is in terms of a
game, where each advertiser declares a set of impressions they are
interested in, and the mechanism then applies $H$ to these declarations.
A fair allocation is then any Nash equilibrium of this game.

We call a fair allocation under equal sharing an {\em equal share
  allocation}.  One can compute one such fair allocation $x^*$, in an
iterative method, as follows:

\bigskip
\noindent\underline{\em Fair Allocation algorithm}
\begin{enumerate}
\item
Maintain allocation variables $\{x_{ij} : i \in I, j \in J\}$ and
prefix ``pointers'' $\{ p(j) : j \in J\}$.  Initialize all $x_{ij} =
0$ and $p(j) = 0$.
\item Until all advertisers are satisfied, i.e., either $\sum_{i\in I}
  x_{ij} \ge n(j)$ or $p(j) = n$:
\begin{enumerate}
\item Let $j$ be some unsatisfied advertiser.  Increase $p(j)$ by one,
  and let $i$ be the $p(j)$-th best impression in $j$'s preference
  order.  Also, let $J(i)$ be the set of all advertisers $j'$ for whom
  $i$ is among the $p(j')$-th best impressions for that advertiser
  (and note $j \in J(i)$).  Set $x_{ij}$ according to $H$ for all $j
  \in J(i)$.  (For example, under equal sharing, we set $x_{ij} = {1
    \over \vert J(i) \vert}$.)
\end{enumerate}
\end{enumerate}

Note that there could be many different fair allocations, each with
different efficiency. For example suppose there were two impressions
$I = \{1,2\}$, and two advertisers $J = \{a,b\}$, each with capacity
one. Now suppose $w_{1,a} = 100$, $w_{2,a} = 10$, $w_{1,b} = 4$,
$w_{2,b} = 6$.  Then $x_{1,a} = x_{2,a} = x_{1,b} = x_{2,b} = \frac 1
2$ is a fair allocation with value $60$; the allocation $x_{1,a} =
x_{2,b} = 1$, $x_{1,b} = x_{2,a} = 0$ is also fair and has value
$106$.  However the following theorem shows that the given algorithm
always finds the most efficient fair allocation.

\begin{theorem} \label{thm:equalshare} The {\em Fair Allocation
    algorithm} runs in polynomial time and computes an offline fair
  allocation under any sharing policy where adding an advertiser to
  the set of interested advertisers does not increase the share of any
  other advertiser.  Moreover, for any sharing policy $H$, this
  algorithm produces the most efficient allocation among all fair
  allocations under $H$.
\end{theorem}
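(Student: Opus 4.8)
The plan is to establish the three assertions separately: termination in polynomial time, feasibility of the output as a fair allocation, and optimality of its efficiency among all fair allocations under $H$. For the running time, I would observe that every iteration strictly increases some pointer $p(j)$, and that pointers never decrease. The subtlety is that the monotonicity hypothesis on $H$ means extending one advertiser's prefix can shrink another advertiser's shares and hence render a previously satisfied advertiser unsatisfied again; such an advertiser is re-selected and its pointer advanced. But each $p(j)$ ranges only over $\{0,1,\ldots,n\}$, and once $p(j)=n$ the advertiser is permanently satisfied through the ``interested in all impressions'' clause, so the total number of increments—and hence iterations—is at most $mn$. Since each iteration only recomputes $J(i)$ and reapplies $H$ on a single impression, this is polynomial.

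For feasibility I would verify Definition~\ref{def:fair} directly. The interested set of each $j$ is by construction always the first $p(j)$ impressions in $j$'s order, so it is a prefix, giving the first condition. Whenever any advertiser extends its prefix to include an impression $i$, the algorithm recomputes the current interested set $J(i)$ and reapplies $H$ to all of $J(i)$, so at termination the shares on every impression equal $H$ applied to its final interested set, giving the second condition. The loop exits only when every advertiser is satisfied, i.e. either $p(j)=n$ or $\sum_i x_{ij}\ge n(j)$, which is exactly the third condition.

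The heart of the argument is optimality, and the key structural claim I would prove is that the prefix vector $p^{\mathrm{alg}}$ produced by the algorithm is componentwise minimal: for any fair allocation $x^*$ with prefix vector $p^*$, we have $p^{\mathrm{alg}}(j)\le p^*(j)$ for every $j$. I would prove this by maintaining the invariant $p(j)\le p^*(j)$ throughout the run. It holds initially since $p(j)=0$. When the algorithm is about to advance an unsatisfied advertiser $j$, the invariant gives $p(k)\le p^*(k)$ for all $k$, hence the current interested set $J(i)$ is contained in $J^*(i)$ for every $i$. By the monotonicity hypothesis on $H$, advertiser $j$'s current share on each impression in its prefix is at least its share under $x^*$, so $j$'s total share under $x^*$ over that same (current) prefix is also below $n(j)$; since $x^*$ is fair (so $j$ receives at least $n(j)$ over its full prefix, or is interested in all impressions), this forces $p(j)<p^*(j)$, and advancing $j$ preserves the invariant. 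Componentwise minimality follows at termination.

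From minimality I would derive $V(x^{\mathrm{alg}})\ge V(x^*)$ advertiser by advertiser. Since $p^{\mathrm{alg}}(j)\le p^*(j)$ for all $j$ implies $J^{\mathrm{alg}}(i)\subseteq J^*(i)$ for all $i$, monotonicity gives $x^{\mathrm{alg}}_{ij}\ge x^*_{ij}$ on every impression in $j$'s (shorter) algorithm prefix. Combining this with the facts that $j$'s prefix consists of its highest-weight impressions and that both allocations assign $j$ essentially the same total share $n(j)$, a weighted rearrangement argument shows $v_j(x^{\mathrm{alg}})\ge v_j(x^*)$: the algorithm concentrates $j$'s demand on the heaviest impressions. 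Summing over $j$ yields the claim. I expect this efficiency step to be the main obstacle, both because it is where the monotonicity hypothesis is genuinely needed and because the rearrangement must be handled carefully depending on whether an advertiser is satisfied through its demand or through exhausting all of its impressions (the ``interested in all impressions'' case), where one instead argues domination impression-by-impression rather than comparing total shares.
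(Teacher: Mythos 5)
Your overall architecture tracks the paper's proof closely. The paper also charges the running time to pointer advances, and its optimality argument uses the monotonicity hypothesis in exactly the way your invariant does: the paper first shows (by intersecting prefixes) that a unique \emph{shortest} fair allocation exists, then argues by a first-violation contradiction that the algorithm outputs it; your componentwise-minimality invariant $p(j) \le p^*(j)$ against an arbitrary fair $x^*$ is a cleaner packaging that subsumes both of those steps, and your verification of it is sound. Up through minimality, your proposal is correct and essentially the same argument as the paper's.

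The genuine gap is in the final efficiency step --- precisely the step you flagged as the main obstacle --- and your proposed repair cannot work. Your rearrangement rests on the premise that ``both allocations assign $j$ essentially the same total share $n(j)$,'' but Definition~\ref{def:fair} only requires each advertiser to receive \emph{at least} $n(j)$; longer fair allocations can overshoot, and the overshoot is exactly what breaks the argument. Concretely: take two advertisers $j,k$ with $n(j)=n(k)=1$, three impressions $A,B,C$, equal sharing, and weights $w_{Aj}=10$, $w_{Bj}=9$, $w_{Ck}=5$, $w_{Ak}=1$ (all other weights $0$), so $j$'s order is $A,B$ and $k$'s order is $C,A$. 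The algorithm's (componentwise-minimal) output gives $j$ all of $A$ and $k$ all of $C$, leaves $B$ unallocated, and has value $15$ with $v_j=10$. But the allocation with prefixes $\{A,B\}$ for $j$ and $\{C,A\}$ for $k$ is also fair under Definition~\ref{def:fair} (prefixes, equal sharing applied, each total is $\frac{3}{2} \ge 1$), and there $v_j = \frac{1}{2}\cdot 10 + 9 = 14 > 10$ and the total value is $19.5 > 15$. So per-advertiser domination fails, and even aggregate domination fails, because the minimal allocation simply wastes the valuable impression $B$; no rearrangement argument starting from pointwise share domination can recover the claim. You should know that the paper's own proof has the same hole: its last step asserts in one sentence that prefix containment plus pointwise share domination imply the efficiency comparison, and the example above satisfies both facts while violating the conclusion. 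Any correct completion must invoke a property beyond Definition~\ref{def:fair} as literally stated --- most plausibly the Nash-equilibrium reading the paper alludes to, under which each advertiser's prefix must be \emph{minimal} given the others' (this excludes the longer allocation above, since $k$ would already be satisfied by $\{C\}$ alone); neither your sketch nor the paper's proof actually uses such a property.
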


\begin{proof}
  In each iteration of the algorithm, one pointer advances, and
  therefore the number of iterations is bounded by the number of edges
  in the allocation graph, which is polynomial. To see that it
  produces the most efficient allocation under any sharing policy $H$,
  we use the following definition: Let $x_1$ and $x_2$ be two fair
  allocations under $H$, and let $I_1(j), I_2(j)$ be the set of
  impressions advertiser $j$ is interested in for $x_1$ and $x_2$
  respectively. Now, $x_1$ is said to be \emph{shorter than} $x_2$ if
  $I_1(j) \subseteq I_2(j)$ for each advertiser $j$, and the
  containment is strict for some advertiser. 

  We show that there exists a unique shortest fair allocation: Let
  $x_1$ and $x_2$ be fair allocations under $H$ such that neither is
  shorter than the other, and define a new allocation in which each
  advertiser $j$ is interested in impressions $I^*(j) = I_1(j) \cap
  I_2(j)$ (i.e., $j$ requests the shorter prefix from $I_1(j)$ and
  $I_2(j)$). It is easy to see that the number of impressions $j$
  receives in the new allocation is at least the minimum of the number
  it receives in $x_1$ and $x_2$, and hence at least
  $n(j)$\footnote{This may be less than $n(j)$ if $j$ is interested in
    all impressions in both $x_1$ and $x_2$, but in this case, $j$ is
    interested in all impressions in the new allocation.}.

  Let $x^*$ be the unique shortest allocation, and let $I^*(j)$ denote
  the set of impressions advertiser $j$ is interested in. To see that
  our algorithm returns $x^*$, consider the first step of the
  algorithm in which $p(j)$ moves beyond $I^*(j)$ for any advertiser
  $j$: Since each other advertiser has so far requested a set of
  impressions no larger than the set it requests for $x^*$ and $j$
  receives $n(j)$ impressions under $x^*$, $j$ already receives $n(j)$
  impressions under our algorithm. Thus, $j$ would not have been
  unsatisfied and the prefix pointer $p(j)$ would not have been
  incremented, a contradiction.

  Finally, it is easy to verify that for any fair allocations $x_1, x_2$,
  if $x_1$ is shorter than $x_2$, then $x_1$ is at least as efficient
  as $x_2$. This follows from the facts that $I_1(j)$ is a prefix of
  $I_2(j)$ when impressions are ordered by $w_{ij}$, and that for each
  impression in $I_1(j)$, $j$ receives a share in $x_1$ that is at
  least as large as it does in $x_2$.
\end{proof}

We can describe other variants of this fair allocation by altering how
we share an impression among those interested in it.  One natural way
to do this is to divide an impression $i$ among all advertisers in
$J(i)$, {\em proportional to the weight} of impression $i$ for these
advertisers, i.e, each advertiser $j \in J(i)$ gets a fraction $w_{ij}
\over \sum_{j'\in J(i)} w_{ij'}$ of impression $i$. We call this
sharing method, the {\em proportional sharing} method.  By a similar
argument to that of Theorem~\ref{thm:equalshare}, we can show that the
algorithm runs in polynomial time.  Later, we will discuss the
efficiency of such a fair allocation.

Inspired by the idea of stable matchings, one can also define an
extreme way of sharing an impression $i$ among advertisers by
introducing a strict preference order for each impression, and giving
this impression $i$ to an interested advertiser in $J(i)$ with the
highest priority in the preference order of impression $i$. In
particular, a natural preference order for impression $i$ is to order
advertisers in non-increasing order of their weight for impression
$i$, i.e, $w_{ij_1} \ge w_{ij_2} \ge \ldots, \ge w_{ij_k}$.  We call
this sharing method, the {\em stable-matching sharing}
method. Although this allocation may have some features that do not
seem ``fair'', an advantage of this definition is that it achieves
{\em approximate efficiency}.

\begin{theorem} The efficiency of the stable-matching sharing method
  is at least $1\over 2$ of the allocation with maximum
  efficiency. Moreover, the efficiency of the equal-sharing and the
  proportional-sharing method can be arbitrarily far from the optimum.
\end{theorem}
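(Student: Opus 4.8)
The plan is to treat the two assertions separately; both rest on the structure of fair allocations from Definition~\ref{def:fair} and Theorem~\ref{thm:equalshare}.

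For the factor-$\tfrac12$ claim, I would first record that the stable‑matching sharing policy hands each impression in full to a single advertiser, so the fair allocation $x^*$ it produces (the output of the Fair Allocation algorithm) is integral on impressions; since impressions and advertisers both effectively rank the edges $(i,j)$ by the common weight $w_{ij}$, $x^*$ coincides with the matching obtained by repeatedly taking a heaviest still‑available edge, so one could just invoke the classical $\tfrac12$‑analysis of greedy weighted matching. I would instead argue directly. Let $\hat x$ be a maximum‑efficiency allocation, taken integral (the DA allocation problem is a bipartite $b$‑matching, whose polytope is integral). For each edge $(i,j)$ of $\hat x$ not used by $x^*$, fairness of $x^*$ forces one of two cases: (a) $j$ is interested in $i$ under $x^*$, but $i$ went to the highest‑priority interested advertiser $j'$, which has $w_{ij'}\ge w_{ij}$; or (b) $j$ is not interested in $i$, hence $j$ is not interested in all impressions and so receives at least $n(j)$ impressions under $x^*$, each preceding $i$ in $j$'s order and therefore of weight at least $w_{ij}$. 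I then charge $w_{ij}$ to the $x^*$‑edge $(i,j')$ in case (a), and in case (b) to one of the $\ge n(j)$ impressions $j$ holds under $x^*$, mapping the $\le n(j)$ edges of $\hat x$ incident to $j$ injectively into them. Every charge is at most the weight of the $x^*$‑edge receiving it, and each $x^*$‑edge absorbs at most one ``impression‑side'' charge (fixed by the fate of that impression in $\hat x$) and at most one ``advertiser‑side'' charge (from the case‑(b) injection), hence at most two in all; therefore $V(\hat x)\le 2V(x^*)$, i.e.\ $x^*$ has efficiency at least half the optimum.

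For the negative part I would exhibit a one‑parameter family: a single impression $i$ and $N+1$ advertisers, each of demand $1$, with $w_{i0}=1$ and $w_{ik}=1/\sqrt N$ for $k=1,\dots,N$. The crucial observation is that every advertiser is forced to be interested in $i$ in any fair allocation --- an advertiser interested in no impression is not interested in all impressions yet receives nothing, contradicting the third clause of Definition~\ref{def:fair} --- so the fair allocation is unique and splits $i$ among all $N+1$ advertisers according to the policy. Its realized value is $\tfrac{\sum_j w_{ij}}{N+1}=\tfrac{1+\sqrt N}{N+1}$ under equal sharing and $\tfrac{\sum_j w_{ij}^2}{\sum_j w_{ij}}=\tfrac{2}{1+\sqrt N}$ under proportional sharing, both tending to $0$ as $N\to\infty$, while the optimum awards $i$ to advertiser $0$ for value $1$; by Theorem~\ref{thm:equalshare} this forced allocation is also the most efficient fair one under each policy, so the ratio is unbounded for the algorithm's output.

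The step I expect to be the main obstacle is the proportional‑sharing lower bound: because proportional sharing already tilts each shared impression toward the advertiser valuing it most, a few heavy competitors barely hurt it, so the bad instance must pit one advertiser against many advertisers whose individual weights vanish yet whose aggregate weight still dominates --- any per‑competitor weight that is $\omega(1/N)$ but $o(1)$ works, which is why the scaling $1/\sqrt N$ is chosen (equal sharing, by contrast, degrades for essentially any vanishing per‑competitor weight). The factor‑$\tfrac12$ part should be routine once the two structural consequences of stable‑matching fairness above are isolated and the double‑charging bookkeeping is carried out carefully.
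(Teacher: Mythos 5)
Your proof is correct but follows a genuinely different route from the paper's on both halves. For the factor-$\tfrac12$ bound, the paper does not argue combinatorially at all: it observes that a fair allocation under stable-matching sharing is a Nash equilibrium of a market sharing game, verifies that this is a valid-utility game with submodular social function, and invokes Vetta's price-of-anarchy bound of $\tfrac12$. Your direct charging argument (case (a) charges an optimal edge $(i,j)$ to the winning edge $(i,j')$ with $w_{ij'}\ge w_{ij}$; case (b) injects the $\le n(j)$ optimal edges at a saturated advertiser into its $\ge n(j)$ heavier won impressions) is the classical stable-/greedy-matching analysis, and it is more elementary and self-contained --- it avoids importing the valid-utility-game machinery and makes transparent exactly which fairness properties are used, so it applies verbatim to \emph{every} fair allocation under the policy, not just the one the algorithm outputs. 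The only bookkeeping point to tighten: as written you charge only edges of $\hat x\setminus x^*$, which literally yields $V(\hat x)\le V(x^*)+2V(x^*)$; you should fold the edges of $\hat x\cap x^*$ into the ``impression-side'' slot as self-charges (there is at most one $\hat x$-edge per impression, so each $x^*$-edge still absorbs at most one impression-side and one advertiser-side charge), which recovers $V(\hat x)\le 2V(x^*)$. For the negative part, the paper's counterexample uses $K^2$ advertisers, $K^2$ low-value personal impressions and one special impression valued $K$ by advertiser $1$ and $1$ by everyone else; your single-impression instance with weights $1$ and $1/\sqrt N$ is simpler, and your calculation $2/(1+\sqrt N)$ for proportional sharing versus $(1+\sqrt N)/(N+1)$ for equal sharing correctly isolates why proportional sharing needs competitors whose aggregate weight dominates even as individual weights vanish --- a point the paper's example also exploits ($K$ versus $K^2-1$) but does not spell out.
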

\begin{proof}
  First, we observe that the equal- and proportional-sharing methods
  can result in a fair allocation with arbitrarily bad performance.
  Consider $K^2$ advertisers; advertiser $i$ has value $\eps < {1\over
    K^2}$ for impression $i$.  In addition, there is one special
  impression; advertiser $1$ has value $K$ for it, and all other
  advertisers have value $1$ for it.  Every advertiser wants 1
  impression. The maximum weight matching gets value at least $K$, by
  giving the special impression to advertiser $1$, and giving every
  other advertiser $i$ impression $i$.  The proportional sharing
  method implies that for the special impression (everyone's first
  choice), the total value for people who want it is $K + (K^2-1)$.
  As a result, the first advertiser only gets roughly $1/K$ of the
  special impression, and therefore, the fair matching with
  proportional sharing is not efficient. The same example shows that
  the equal sharing method may also result in an inefficient fair
  allocation.

  Now, for the stable-matching sharing method, one can verify that the
  fair allocation in this setting is equivalent to a Nash equilibrium
  of a market sharing game defined as follows: The players are
  advertisers and markets are impressions $I$. Each player $j$ can
  play a subset $S_j\subset I$ of size at most $n(j)$ of impressions,
  and the weight of each impression goes to a player who has this
  impression in her item set $S_j$. It is not hard to show that this
  game is a valid-utility game with a submodular social function equal
  to the weight of the corresponding matching in an equilibrium. It
  follows by a known result of Vetta~\cite{V02}, that the price of
  anarchy of Nash equilibria in these games is $1\over 2$, and this
  implies that the value of the fair matching with stable-matching
  sharing rule is at least $1\over 2$ of the optimum solution.
\end{proof}

Even though, in the worst case, the equal sharing method may result in
an arbitrarily inefficient allocation, in practice it seems that the
efficiency of the equal-sharing allocation is on the same order of
magnitude as the optimum efficiency (we will show this in our
experiments in Section~\ref{sec:experimental}).

\section{Online Heuristic Algorithms}\label{sec:heuristics} In this
section, we list a set of online competitive algorithms for the
display ad allocation problem that we will study in our experimental
evaluation. Some of these algorithms are already known and analyzed
for their theoretical worst-case performance~\cite{WINE09}, and some
are combinations of the algorithms studied in this paper.

All of these algorithms can be described based on the primal and dual
linear programming formulations for the display ad allocation problem
studied in Section~\ref{sec:ptas}.  In fact, we can interpret these
algorithms as simultaneous constructing feasible solutions to the
primal and dual LPs, using the following outline:
\begin{itemize}
\item For each advertiser $j$, initialize dual variable $\beta_j$
to $0$.

\item When an impression $i$ arrives online, assign $i$ to the
  advertiser $j' \in J$ that maximizes $w_{ij} - \beta_j$. (If this
  value is negative for each $j$, we may leave impression $i$
  unassigned.) Set $x_{ij'} = 1$.

\item If $j'$ previously had $n(j')$ impressions assigned, let $i'$ be
  the least valuable of these; set $x_{i'j'} = 0$.

\item In the dual solution, set $z_i = w_{ij'} - \beta_{j'}$ and
  increase $\beta_{j'}$ using an appropriate \emph{update rule} (see
  below); different update rules give rise to different
  algorithms/allocations.
\end{itemize}

In order to define different variants of this algorithm, we should
define the update rule for the dual variables.

\begin{enumerate}
\item {\bf Greedy Algorithm {\greedy}:} For each advertiser $j$,
  $\beta_j$ is the weight of the lightest impression among the $n(j)$
  heaviest impressions currently assigned to $j$. That is, $\beta_j$
  is the weight of the impression which will be discarded if $j$
  receives a new high-value impression. An equivalent interpretation
  of this algorithm is to assign each impression to the advertiser
  with the maximum marginal increase in the weight of the matching.

\item {\bf Uniform Average (\pdavg): } For each advertiser $j$,
  $\beta_j$ is the average weight of the $n(j)$ most valuable
  impressions currently assigned to $j$. If $j$ has fewer than $n(j)$
  assigned impressions, $\beta_j$ is the ratio between the total
  weight of assigned impressions and $n(j)$.

\item {\bf Exponential Weighted Average (\pdexp) :} For each
  advertiser $j$, $\beta_j$ is an ``exponentially weighted average''
  of the $n(j)$ most valuable impressions, defined as follows: Let
  $w_1, w_2, \ldots w_{n(j)}$ be the weights of impressions currently
  assigned to advertiser $j$, sorted in non-increasing order.\\ Let
  $\beta_j = \frac{1}{n(j) \cdot \left((1+1/n(j))^{n(j)} - 1 \right)}
  \sum_{k=1}^{n(j)} w_k \left(1 + \frac{1}{n(j)} \right)^{k-1}$.
\end{enumerate}

In the previous paper~\cite{WINE09}, the authors prove that
{\greedy}, {\pdavg}, and {\pdexp} algorithms achieve worst-case
competitive ratios of $1\over 2$, $1\over 2$, and $1-{1\over e}$
respectively. In this paper, we will compare these online algorithms
with a training-based algorithm which is based on computing dual
variables $\beta$ based on some sample data, and then applying these
fixed dual variables for the rest of the algorithm.

We also study a hybrid algorithm, called {\hybrid}, combining the
training-based online algorithm from Section~\ref{sec:ptas} and a pure
online algorithm. This algorithm is inspired by ideas of Mahdian,
Nazerzadeh, and Saberi~\cite{MNS}.  In this hybrid algorithm, we set
$\beta_j$ for each advertiser $j$ to be a convex combination of two
algorithms: Let $\beta^1_j$ be the dual variable learnt by the
training-based algorithm and remaining fixed throughout the algorithm
and let $\beta^2_j$ be the dual variable as currently used by
\pdavg. We set $\beta_j = \alpha \beta^1_j + (1-\alpha) \beta^2_j$ for
some $0 \leq \alpha \leq 1.$ Initially we set $\alpha = 1$ and we
decrease $\alpha$ gradually throughout the algorithm until it hits
0. Thus the algorithm starts using the fixed $\beta^1$ values and
gradually switches to the $\beta^2$ values, which in turn change as
impressions are processed.  As we will see in the experimental
results, this algorithm outperforms both the training-based and the
\pdavg algorithm.

\section{Experimental Evaluation}\label{sec:experimental}
In this section, we discuss the experimental results comparing the
efficiency and fairness of the algorithms discussed in this paper.

{\bf \noindent Data Set.}  Our sample data set consists of
(a uniform sample) of a set of arriving impressions and a set of advertisers
for six different publishers (A-F) over one week.  The
number of arriving impression varies from 200,000 to 1,500,000
impressions, and the number of advertisers per publisher varied from
100 to 2,600 advertisers (see Table~\ref{tab:info}).
Each impression is tagged with their set of
eligible advertisers, and an {\em edge weight} for each eligible
advertiser capturing the ``quality score'' for assigning this
impression to this advertiser.  The distribution
of {\em edge weights}  approximately follows the
log-normal distribution.

\begin{table}
\begin{center}
\begin{tabular}{|ccccccc|}
\hline
\tiny{Publishers} & A & B  &  C  & D  & E & F\\
\hline
$m$ & 109 &  1117 & 636 & 1586 & 2585 & 1113 \\
$n$ & \tiny{$5\times 10^5$} & \tiny{$4\times 10^5$} &\tiny{$2\times 10^5$} &\tiny{$9\times 10^5$} & \tiny{$1.5\times 10^6$} &\tiny{$4\times 10^5$} \\
\hline
\end{tabular}
\caption{Number of advertisers and number of arriving 
impressions for each of the six publishers.}
\label{tab:info}
\end{center}
\end{table}

\begin{table}
\begin{center}
\begin{tabular}{|cccccccc|}
\hline
Publishers & A & B  &  C  & D  & E & F &  Avg\\
\hline
\lpweight & 100 & 100 & 100 & 100 & 100 & 100 & 100 \\
\fair & 88.2 & 98.4 & 73.6 & 42.3 & 74.6 & 53.3 & 71.7 \\
\hline
\dualbase & 85 & 93 & 85.7 & 74 & 91.8 & 93.5 & 87.2 \\
\hybrid & 85 & 93.8 & 95.2 & 73.8 & 92.7 & 93.5 & 89 \\
\hline
\greedy & 64 & 90.5 & 69.7 & 53.6 & 55 & 86.2 & 69.8 \\
\pdavg & 72 & 93.2 & 75.3 & 65.3 & 71.7 & 89.5 & 77.8 \\
\pdexp & 72.6 & 89.7 & 73.9 & 90.8 & 72.6 & 96.3 & 82.6 \\ 
\hline
\end{tabular}
\caption{Normalized efficiency of different algorithms for different publishers and
averaged over all publishers. 
All numbers are normalized such that the efficiency of $\mbox{OPT}=$\lpweight is 
100.}
\label{tab:efficiency}
\end{center}
\end{table}

\begin{table}
\begin{center}
\begin{tabular}{|cccccccc|}
\hline
Publishers & A &    B   &  C  &    D  & E & F &    Avg\\
\hline
\fair & 0 & 0 & 0 & 0 & 0 & 0 & 0 \\
\lpweight & 34.6 & 47.7 & 98.8 & 100 & 70.3 & 90.1 & 73.6 \\
\hline
\dualbase & 69.5 & 62.5 & 96.7 & 43.1 & 87.9 & 88.6 & 74.7 \\
\hybrid & 69.4 & 63.1 & 100 & 41.9 & 83.7 & 88.6 & 74.5 \\
\hline
\greedy & 100 & 100 & 98.6 & 45 & 100 & 100 & 90.6 \\
\pdavg & 73 & 72 & 82.7 & 31.7 & 91.9 & 85.3 & 72.8 \\
\pdexp & 69.7 & 59.5 & 86.1 & 71 & 88.8 & 100 & 79.2 \\
\hline
\end{tabular}
\caption{Normalized fairness of different algorithms for different publishers and averaged over all publishers. 
All numbers of each column are normalized between zero and 100, where
$0$ is the most fair solution.}
\label{tab:fairness}
\end{center}
\end{table}

\smallskip {\bf \noindent The Algorithms.}  We examine (a) three pure
online algorithms, (b) two training-based online algorithms, and (c)
two offline algorithms.  (a) The pure online algorithms are
{\greedy}, {\pdavg}, and {\pdexp}; see
Section~\ref{sec:heuristics}.  (b) For the training-based online
algorithm we use the primal-dual based algorithm from
Section~\ref{sec:ptas}, called {\dualbase}, and the \hybrid
algorithm from Section~\ref{sec:heuristics}. For both of them we
construct the training data as follows: For each data set, sample 1\%
of the impressions uniformly and use it for training.  The remaining
99\% of the impressions are used as a test set.  With this sampling
step we hope to proxy the random order model, since in the random
order model a sample of the whole data is equivalent to a sample from
the beginning part of the sequence.
(c) As offline algorithms we use the fair algorithm using {\em equal
  sharing}, called \fair and described in Section~\ref{sec:fair},
and the algorithm \lpweight, which computes the
optimal efficient assignment (i.e. the maximum weight b-matching). The
latter is computed by solving the primal LP using the GLPK LP solver.


\begin{figure}[t]
\includegraphics[width=7cm,height=2.25in]{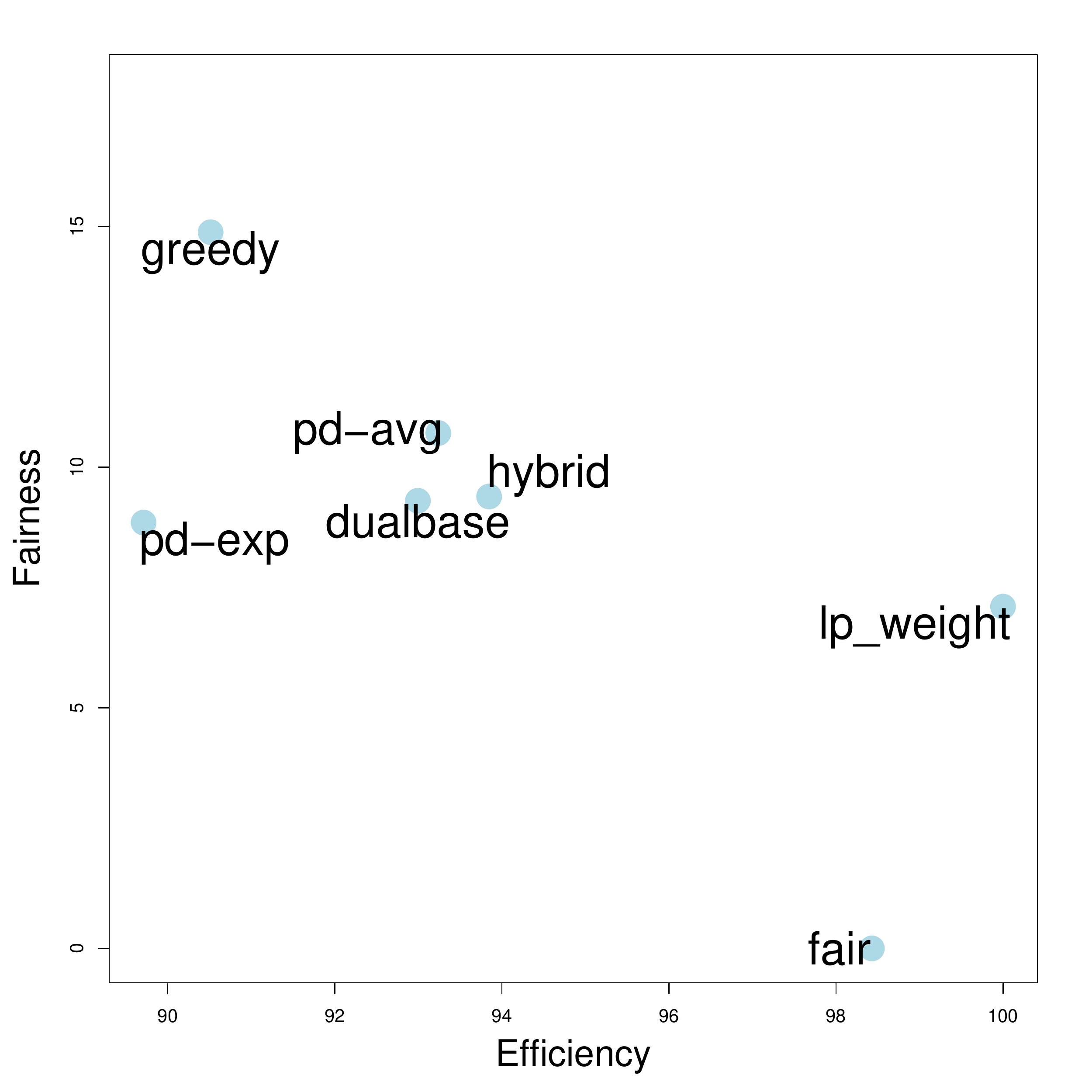}
\includegraphics[width=7cm,height=2.25in]{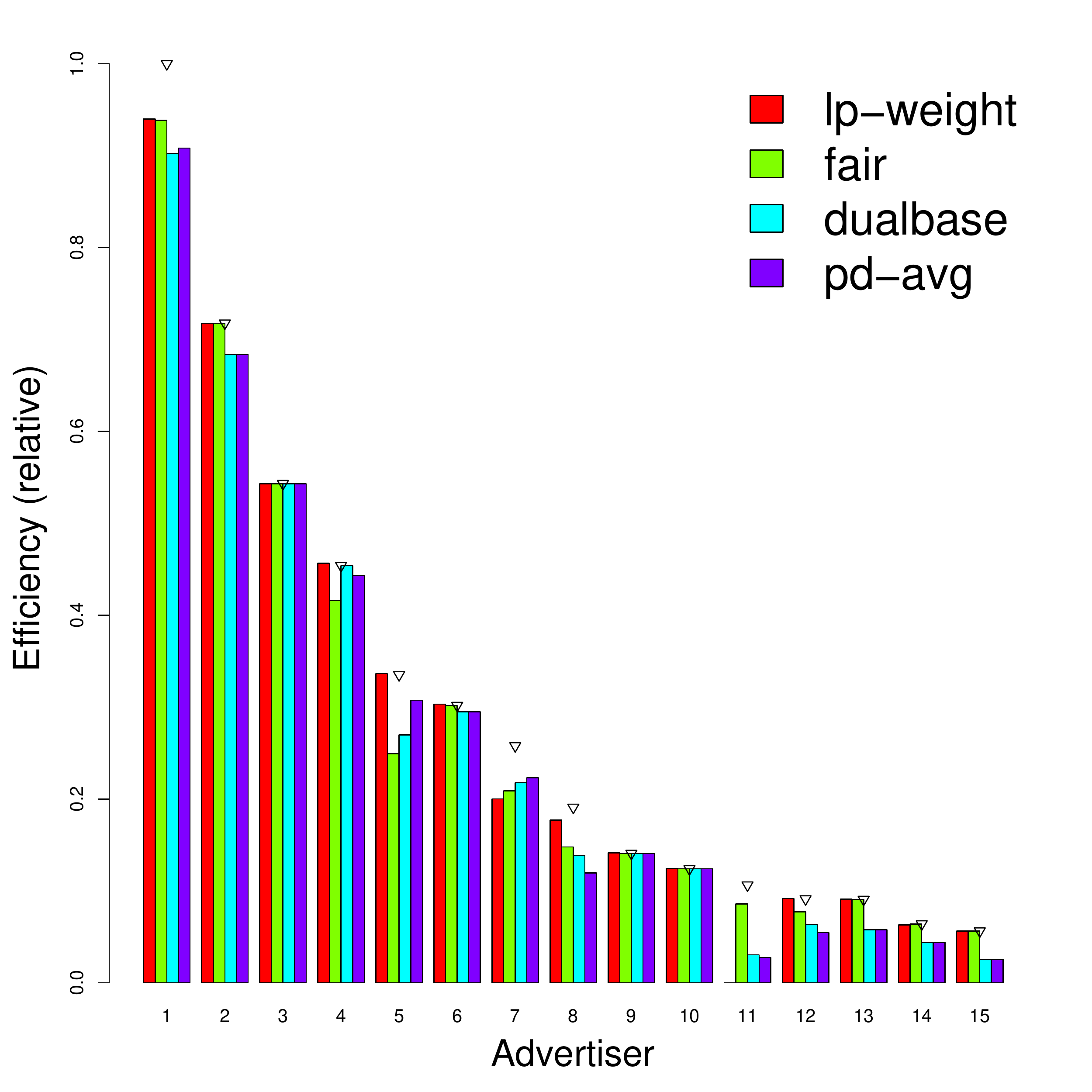}
\caption{
Efficiency and fairness of algorithms for Publisher B (left). 
Comparison of efficiency of different advertisers for Publisher B (right).
Advertisers are sorted by their maximum possible efficiency 
(given by the inverted triangle).
}
\label{fig:ef1}
\vspace{-2mm}
\end{figure}

\begin{figure}[tbh]
\includegraphics[width=7cm,height=2.25in]{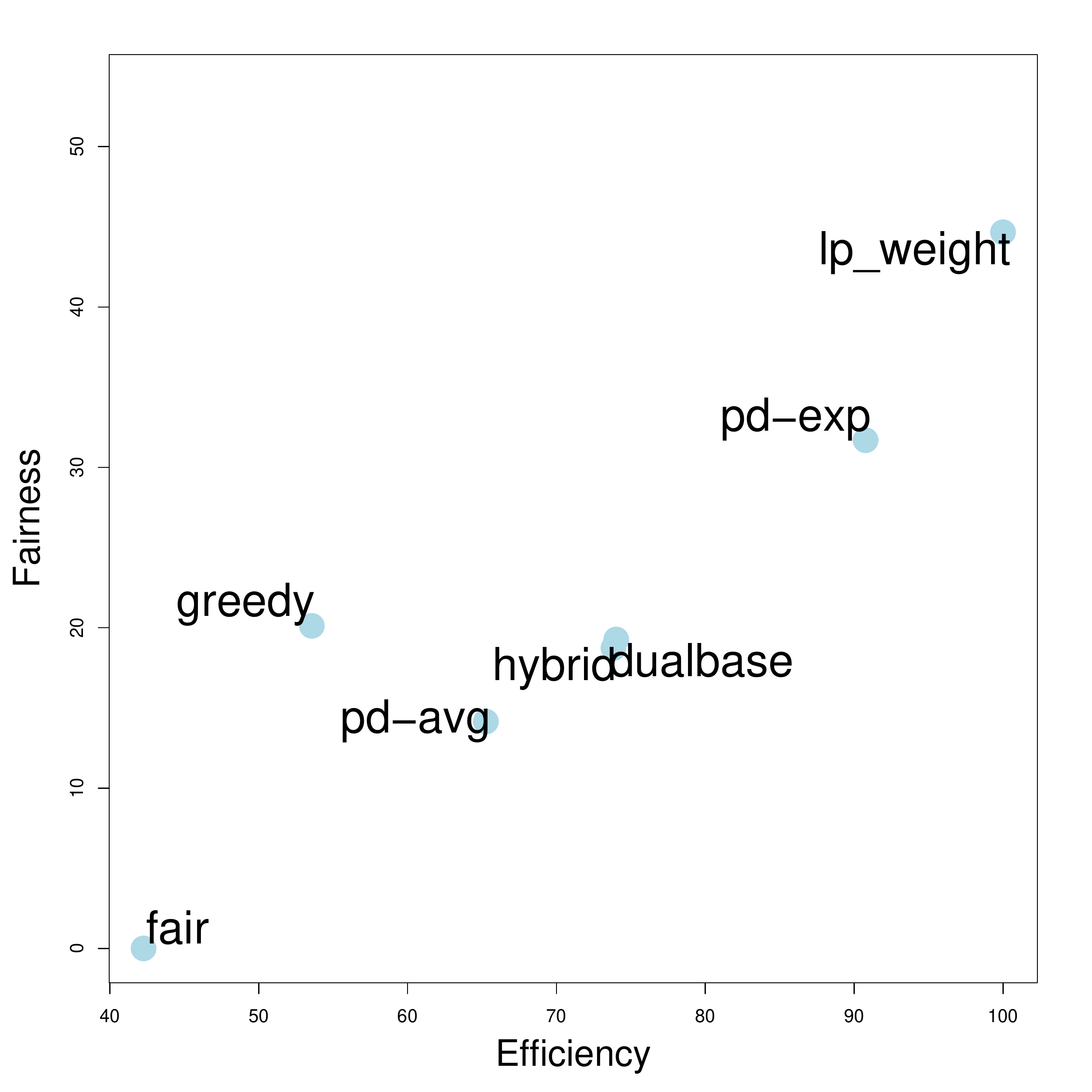}
\includegraphics[width=7cm,height=2.25in]{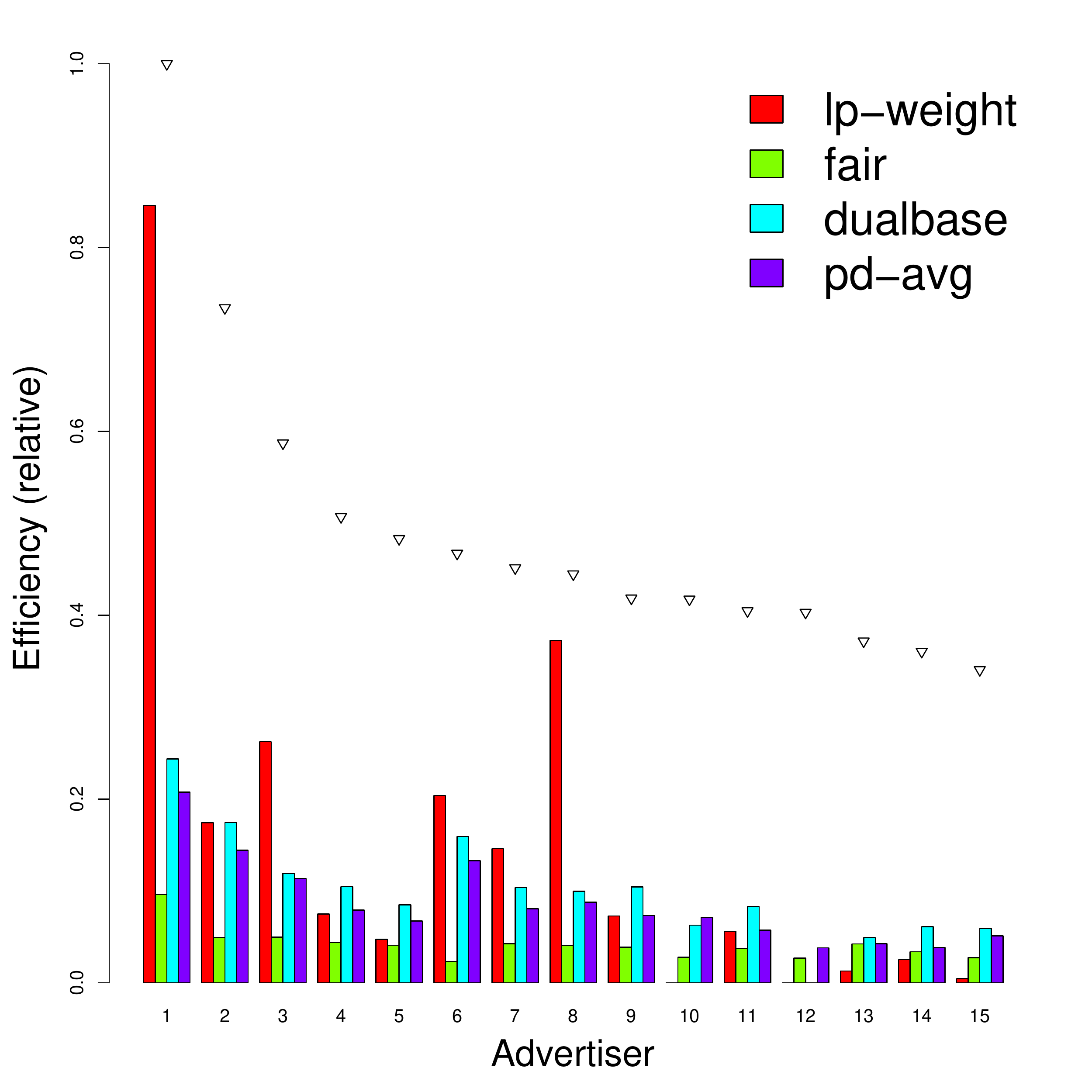}
\caption{
Efficiency and fairness of algorithms for Publisher D (left). 
Comparison of efficiency of different advertisers for Publisher D (right).
Advertisers are sorted by their maximum possible efficiency 
(given by the inverted triangle).}
\label{fig:comp}
\vspace{-2mm}
\end{figure}

\begin{figure}[tbh]
\includegraphics[width=7cm,height=2.25in]{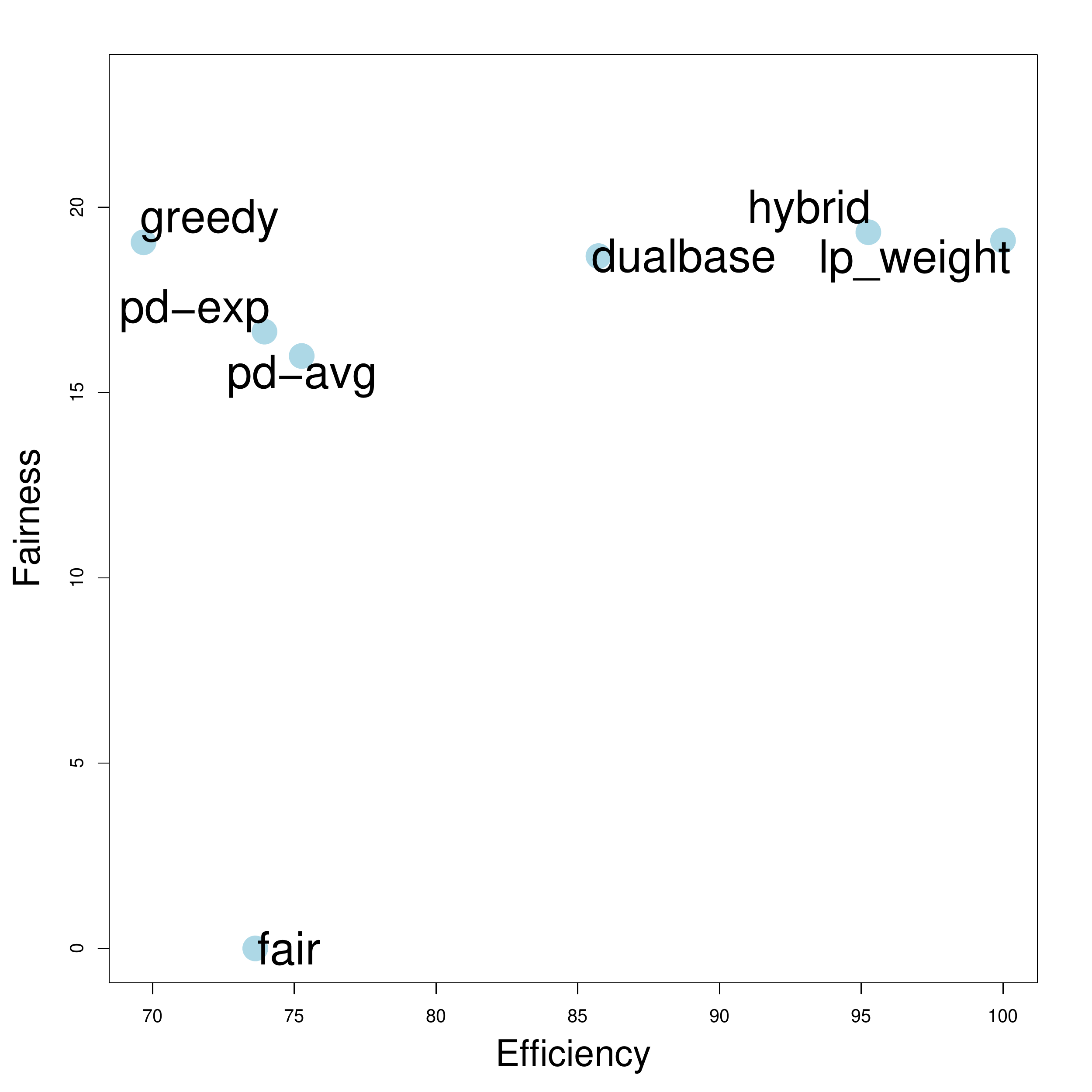}
\includegraphics[width=7cm,height=2.25in]{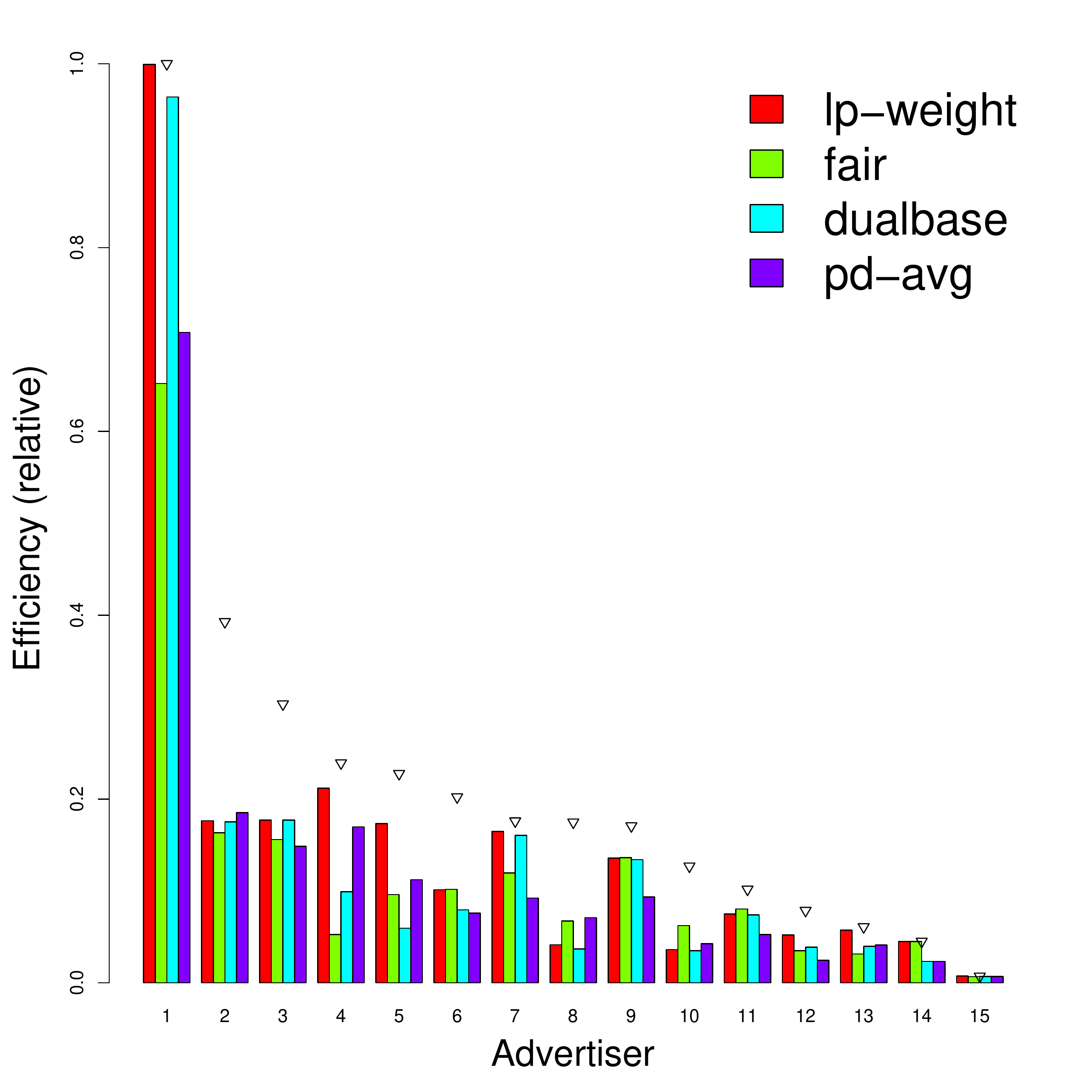}
\caption{
Efficiency and fairness of algorithms for Publisher C (left). 
Comparison of efficiency of different advertisers for Publisher C (right).
Advertisers are sorted by their maximum possible efficiency 
(given by the inverted triangle).}
\label{fig:ef2}
\vspace{-4mm}
\end{figure}

\smallskip {\bf \noindent Experimental Results.}  The efficiency and
(normalized) fairness of the output of each of the algorithms are
summarized in Tables~\ref{tab:efficiency} and \ref{tab:fairness}. The
results for three representative publishers are additionally depicted
in Figures ~\ref{fig:ef1}, ~\ref{fig:ef2}, and ~\ref{fig:comp}.
Recall that we normalized efficiency so that the efficiency-optimal
algorithm \lpweight has efficiency $100$. Table~\ref{tab:efficiency}
shows that (1) the training-based algorithms clearly outperform the
pure online algorithms, (2) of the pure online algorithms, both
\pdavg and \pdexp outperform \greedy, and (3) \hybrid and
\dualbase perform very similarly, except for one publisher where
\hybrid clearly outperforms \dualbase.

Table~\ref{tab:fairness} shows {\em normalized} fairness. Since the
value of fairness depends on the values assigned to advertisers and
different publishers have different advertisers, we normalized the
fairness values for each publisher so that the {\em least} fair
algorithm achieves a score of 100 and algorithm \fair achieves a
score of 0. Normalizing allows us to compute the average over
different publishers.  The results in the table indicate that
\greedy is the least fair algorithm. The remaining algorithms,
including \lpweight, perform roughly the same, though their
performance differs over different publishers.

Figures~\ref{fig:ef1}--\ref{fig:ef2} plot efficiency
vs. (unnormalized) fairness and they show additionally the efficiency
achieved for the top 10 advertisers for four of the algorithms. The
inverted triangle above each advertiser represents the maximum
possible efficiency for this advertiser if the other advertisers did
{\em not} exist. There are three rough categories and the publishers
for which we show this data each represent a different category: For
publisher B in Figure~\ref{fig:ef1} the maximum possible efficiency of
the top advertisers is almost the same as the efficiency achieved by
all algorithms.  This publisher is undersold with little competition
between the advertisers.  Thus, for this publisher, the choice of
algorithm does not heavily influence efficiency.
Table~\ref{tab:efficiency}, shows that for publisher B all algorithms,
including \fair, achieve an efficiency of 90 or above. The situation
is similar for publisher A (not shown). In both settings \fair has an
impressively high efficiency and \lpweight achieves a good fairness
value. In such a low-competitive situation the online algorithms are
in a clear disadvantage over the offline algorithms. Also the
training-based online algorithms outperform the pure online algorithms
as they can leverage their knowledge about the data to construct a
more efficient and more fair solution.

Publisher D in Figure~\ref{fig:comp} shows the other extreme: Here the
maximum possible efficiency of the top advertisers is much larger than
the efficiency achieved by any of the algorithms, including the
optimum \lpweight. This publisher has a lot of competition between
the advertisers.  Publisher F (not shown) is in a similar, but a bit
less extreme situation.  In both cases, the choice of an algorithm has
a large influence on the efficiency, as can be seen in
Table~\ref{tab:efficiency}: Algorithm \fair distributes the weight
more evenly across the advertisers than any of the other algorithms,
but also achieves only an efficiency of about 42, resp.~53. Algorithm
\lpweight, on the other side, generates a very uneven distribution
of weights, giving a lot of efficiency to advertiser 1 and 8.  For
both publishers \pdexp clearly outperforms the non-optimal
algorithms. \pdexp also has better theoretical performance.

Finally publishers C (in Figure~\ref{fig:ef2}) and E (not shown)
represent the ``in-between'' situation: The maximum possible
efficiency of the top advertisers is somewhat larger than the
efficiency achieved by the algorithms, but there is not a large
gap. In both cases the training-based algorithms clearly outperform
the pure online algorithms in efficiency. Thus, this is the situation
where learning clearly helps in terms of efficiency.

\bigskip
\noindent Overall we draw the following conclusions:

Algorithm \pdavg generally achieves much better efficiency and
fairness than \greedy, even though both algorithms are $1\over
2$-competitive in the worst case. Algorithm \pdavg also results in
the best fair solution among all algorithms and \greedy has the
worst fairness measure.

The training-based algorithms generally achieve higher efficiency than
the pure online algorithms, especially in settings that are not too
extreme, i.e., oversold or undersold. On average, \dualbase improves
12\% over \pdavg, and 5\% over \pdexp.  Furthermore, \hybrid has
a marginal improvement (of 2\% on average, and upto 10\%) over
\dualbase, mostly based on a big improvement for one publisher.

Though the worst-case competitive analysis of \pdexp is much better
than \pdavg, this algorithm showed only $5\%$ overall improvement
over \pdavg, and in one case showed a significant loss in
efficiency. However, in highly competitive settings, \pdexp gives
large improvements.
 
\section{Concluding Remarks}
In this paper, we give a training-based algorithm for online
allocation, and prove that in the random-order stochastic model, it
achieves a $(1-\eps)$ approximation to the optimal solution under
mild assumptions. 

We also considered the Display Ad Allocation problem from both a
theoretical and empirical perspective, studying fairness in addition
to efficiency.  We introduced different notions of offline fair
allocations, and present a new fairness measure as a distance to such
offline fair allocations. Finally, we performed an experimental
evaluation of our training-based algorithm, along with previously
studied online algorithms and some hybrid algorithms. We compared
their performance on data sets from real display ad allocation
problems; our experiments show that among the pure online algorithms
designed for worst-case inputs, {\pdavg} performs reasonably well in
terms of both efficiency and fairness, and \pdexp gives large
improvements for more difficult instances. The training-based
algorithm outperforms {\pdavg} and \pdexp by a large factor, and
combining pure online and training-based methods in a hybrid algorithm
improves the efficiency further.

This paper motivates many open problems to explore: (i) Can we achieve
an algorithm that is simultaneously good both in the worst case and in
stochastic settings? Such an algorithm would be of use when the actual
distribution of agents is different from the one predicted/learnt from
a sample; in the display ad setting, this occurs when there is a
sudden spike in traffic to a website, perhaps in response to a
breaking news event, or links from an extremely high-traffic
source. (ii) Can we design an online allocation algorithm that
provably achieves approximate efficiency and approximate fairness (for
some an appropriate notion of fairness) at the same time? (iii) Can we
prove that in certain settings that appear in practice, the \pdavg
algorithm achieves an improved approximation factor (i.e., better than
$1\over 2$)?  (iv) Can we extend the online stochastic algorithm
studied in this paper to other stochastic process models such as
Markov-based stochastic models?  Answering these questions is an
interesting subject of future research.

\medskip 

{\bf \noindent Acknowledgments.} This paper is a followup of our
previous work with S. Muthukrishnan and Martin P\'{a}l, and some of the
results and discussions in this paper are inspired by our initial
discussions with them. We thank Martin and Muthu for their
contributing insights toward this paper. We also thank the Google
display ad team, and especially Scott Benson for helping us with data
sets used in this paper.

{\small{\bibliographystyle{abbrv} \bibliography{onlinematching}}}

\begin{thebibliography}{10}

\bibitem{AWY09}
S.~Agrawal, Z.~Wang, and Y.~Ye.
\newblock A dynamic near-optimal algorithm for online linear programming.
\newblock Working paper posted at http://www.stanford.edu/~yyye/.

\bibitem{AM09}
S.~Alaei and A.~Malekian.
\newblock Maximizing sequence-submodular functions, manuscript.
\newblock 2009.

\bibitem{AS07}
A.~Asadpour and A.~Saberi.
\newblock An approximation algorithm for max-min fair allocation of indivisible
  goods.
\newblock In {\em STOC}, pages 114--121, 2007.

\bibitem{AAP-Routing93}
B.~Awerbuch, Y.~Azar, and S.~Plotkin.
\newblock {Throughput-competitive on-line routing}.
\newblock In {\em FOCS}, volume~34, pages 32--40, 1993.

\bibitem{azar2008iaa}
Y.~Azar, B.~Birnbaum, A.~Karlin, C.~Mathieu, and C.~Nguyen.
\newblock {Improved Approximation Algorithms for Budgeted Allocations}.
\newblock In {\em ICALP}, 2008.

\bibitem{BIKK08}
M.~Babaioff, N.~Immorlica, D.~Kempe, and R.~Kleinberg.
\newblock Online auctions and generalized secretary problems.
\newblock {\em SIGecom Exchanges}, 7(2), 2008.

\bibitem{BS06}
N.~Bansal and M.~Sviridenko.
\newblock The santa claus problem.
\newblock In {\em STOC}, pages 31--40, 2006.

\bibitem{BCG09}
M.~Bateni, M.~Charikar, and V.~Guruswami.
\newblock Maxmin allocation via degree lower-bounded arborescences.
\newblock In {\em STOC}, 2009.

\bibitem{broder-tutorial}
A.~Broder.
\newblock Introduction to computational advertising, tutorial at wine '09.

\bibitem{broder-broadmatch}
A.~Z. Broder, P.~Ciccolo, E.~Gabrilovich, V.~Josifovski, D.~Metzler, L.~Riedel,
  and J.~Yuan.
\newblock Online expansion of rare queries for sponsored search.
\newblock In {\em WWW}, pages 511--520, 2009.

\bibitem{broder-IR-1}
A.~Z. Broder, M.~Fontoura, V.~Josifovski, and L.~Riedel.
\newblock A semantic approach to contextual advertising.
\newblock In {\em SIGIR}, pages 559--566, 2007.

\bibitem{buchbinder-jain-naor}
N.~Buchbinder, K.~Jain, and J.~Naor.
\newblock {Online Primal-Dual Algorithms for Maximizing Ad-Auctions Revenue}.
\newblock In {\em Proc. ESA}, page 253. Springer, 2007.

\bibitem{BN06}
N.~Buchbinder and J.~Naor.
\newblock Improved bounds for online routing and packing via a primal-dual
  approach.
\newblock In {\em FOCS}, pages 293--304, 2006.

\bibitem{CCK09}
D.~Chakrabarty, J.~Chuzhoy, and S.~Khanna.
\newblock On allocating goods to maximize fairness.
\newblock In {\em FOCS}, 2009.

\bibitem{chakrabarty2008aba}
D.~Chakrabarty and G.~Goel.
\newblock {On the approximability of budgeted allocations and improved lower
  bounds for submodular welfare maximization and GAP}.
\newblock In {\em Proc. FOCS}, pages 687--696, 2008.

\bibitem{CHMS10}
S.~Chawla, J.~D. Hartline, D.~Malec, and B.~Sivan.
\newblock Sequential posted pricing and multi-parameter mechanism design.
\newblock {\em CoRR}, To Appear, STOC 2010, 2010.

\bibitem{flavio2}
F.~Chierichetti, R.~Kumar, and S.~Vassilvitskii.
\newblock Similarity caching.
\newblock In {\em PODS}, pages 127--136, 2009.

\bibitem{devanur-hayes}
N.~Devanur and T.~Hayes.
\newblock The adwords problem: Online keyword matching with budgeted bidders
  under random permutations.
\newblock In {\em ACM EC}, 2009.

\bibitem{WINE09}
J.~Feldman, N.~Korula, V.~Mirrokni, S.~Muthukrishnan, and M.~Pal.
\newblock Online ad assignment with free disposal.
\newblock In {\em WINE}, 2009.

\bibitem{FMMM09}
J.~Feldman, A.~Mehta, V.~Mirrokni, and S.~Muthukrishnan.
\newblock Online stochastic matching: Beating 1 - 1/e.
\newblock In {\em FOCS}, 2009.

\bibitem{GMPV09}
A.~Ghosh, P.~McAfee, K.~Papineni, and S.~Vassilvitskii.
\newblock Bidding for representative allocations for display advertising.
\newblock In {\em WINE}, pages 208--219, 2009.

\bibitem{GRVZ09}
A.~Ghosh, B.~I.~P. Rubinstein, S.~Vassilvitskii, and M.~Zinkevich.
\newblock Adaptive bidding for display advertising.
\newblock In {\em WWW}, pages 251--260, 2009.

\bibitem{GM07}
G.~Goel and A.~Mehta.
\newblock Adwords auctions with decreasing valuation bids.
\newblock In {\em WINE}, pages 335--340, 2007.

\bibitem{GM08}
G.~Goel and A.~Mehta.
\newblock Online budgeted matching in random input models with applications to
  adwords.
\newblock In {\em SODA}, pages 982--991, 2008.

\bibitem{KVV}
R.~Karp, U.~Vazirani, and V.~Vazirani.
\newblock An optimal algorithm for online bipartite matching.
\newblock In {\em Proc. STOC}, 1990.

\bibitem{KRT01}
J.~M. Kleinberg, Y.~Rabani, and {\'E}.~Tardos.
\newblock Fairness in routing and load balancing.
\newblock {\em J. Comput. Syst. Sci.}, 63(1):2--20, 2001.

\bibitem{Kleinberg05}
R.~Kleinberg.
\newblock {A multiple-choice secretary algorithm with applications to online
  auctions}.
\newblock In {\em Proceedings of the sixteenth annual ACM-SIAM symposium on
  Discrete algorithms}, pages 630--631. Society for Industrial and Applied
  Mathematics, 2005.

\bibitem{KP09}
N.~Korula and M.~Pal.
\newblock Algorithms for secretary problems on graphs and hypergraphs.
\newblock In {\em ICALP}, 2009.

\bibitem{KK06}
A.~Kumar and J.~M. Kleinberg.
\newblock Fairness measures for resource allocation.
\newblock {\em SIAM J. Comput.}, 36(3):657--680, 2006.

\bibitem{LMMS04}
R.~Lipton, E.~Markakis, E.~Mossel, and A.~Saberi.
\newblock On approximately fair allocations of indivisible goods.
\newblock In {\em ACM EC}, 2004.

\bibitem{MNS}
M.~Mahdian, H.~Nazerzadeh, and A.~Saberi.
\newblock Allocating online advertisement space with unreliable estimates.
\newblock In {\em ACM EC}, pages 288--294, 2007.

\bibitem{msvv}
A.~Mehta, A.~Saberi, U.~Vazirani, and V.~Vazirani.
\newblock Adwords and generalized online matching.
\newblock In {\em FOCS}, 2005.

\bibitem{flavio1}
S.~Pandey, A.~Z. Broder, F.~Chierichetti, V.~Josifovski, R.~Kumar, and
  S.~Vassilvitskii.
\newblock Nearest-neighbor caching for content-match applications.
\newblock In {\em WWW}, pages 441--450, 2009.

\bibitem{srinivasan2008baf}
A.~Srinivasan.
\newblock {Budgeted Allocations in the Full-Information Setting}.
\newblock In {\em APPROX}, 2008.

\bibitem{V02}
A.~Vetta.
\newblock Nash equilibria in competitive societies, with applications to
  facility location, traffic routing and auctions.
\newblock In {\em FOCS}, 2002.

\end{thebibliography}


\end{document}